\DeclareFontFamily{T1}{calligra}{}
\DeclareFontShape{T1}{calligra}{m}{n}{<->s*[1.04]callig15}{}
\DeclareMathAlphabet\mathcalligra   {T1}{calligra} {m} {n}
\DeclareMathAlphabet\mathzapf       {T1}{pzc} {mb} {it}
\DeclareMathAlphabet\mathchorus     {T1}{qzc} {m} {n}
\DeclareMathAlphabet\mathrsfso      {U}{rsfso}{m}{n}
\numberwithin{equation}{section}
\newcommand{\e}{\mathrm{e}}
\newcommand{\C}{\mathbb{C}}
\newcommand{\R}{\mathbb{R}}
\newcommand{\Ra}{R_{\Omega}}
\newcommand{\Hm}[1]{\leavevmode{\marginpar{\tiny%
$\hbox to 0mm{\hspace*{-0.5mm}$\leftarrow$\hss}%
\vcenter{\vrule depth 0.1mm height 0.1mm width \the\marginparwidth}%
\hbox to
0mm{\hss$\rightarrow$\hspace*{-0.5mm}}$\\\relax\raggedright #1}}}
\newtheorem{claim}{Claim}[section]
\newtheorem{theorem}[claim]{Theorem}
\newtheorem{lemma}[claim]{Lemma}
\newtheorem{remark}[claim]{Remark}
\newenvironment{proof}[1][Proof]{\textsl{#1.} }{\ \rule{0.4em}{0.7em}}
\begin{document}
%\graphicspath{ {./rysunki/} }

\title{ Fermi's golden rule in  tunneling models with quantum waveguides perturbed by Kato class measures}

\author{Sylwia Kondej\footnote{Corresponding author.}  \,  and  Kacper Ślipko}
\date{\small %$^{1}$I
Institute of Physics, University of Zielona G\'ora, ul.\ Szafrana 4a, \\ 65246 Zielona G\'ora, Poland \\ \emph{e-mail:  s.kondej@if.uz.zgora.pl\,,  slipkokacper@gmail.com}
}
%\small

\maketitle

\begin{abstract} In this paper we consider  two dimensional quantum system with an infinite waveguide of the width $d$ and  a transversally invariant profile. Furthermore, we assume that at a distant $\rho$ there is a perturbation defined by the Kato measure. We show that, under certain conditions, the resolvent of the Hamiltonian has the second sheet pole which reproduces the resonance at $z(\rho)$ with the asymptotics $z(\rho)=\mathcal E_{\beta ; n}+\mathcal O \Big(\frac{ \exp(-\sqrt{2 |\mathcal E_{\beta ;n}| } \rho  )}{\rho }\Big)$ for $\rho$ large and with the resonant energy $\mathcal E_{\beta ;n}$. Moreover, we show that the imaginary component of $z(\rho)$ satisfies  Fermi's golden  rule which we explicitly derive.
\end{abstract}

\bigskip

{\bf Keywords:} Resonances in quantum system, tunnelling, resolvent poles, Fermi's golden rule.
\medskip

\indent {\bf Mathematics Subject Classification:} 47B38, 81Q10, 81Q15, 81Q80
\section{Introduction}

%an open channel and a trap inducing resonances due to the tunnelling.
%\\ \\

The problem studied  in this paper concerns a two dimensional quantum system with a straight waveguide of width $d$ perturbed by a distant potential, represented by the Kato class measure.
%, which induces a tunneling effect and consequently leads to resonant states.
The   waveguide will be imitated by the potential
\begin{equation}\label{eq-potentialalpha}
    V_\alpha = -\alpha  (x_2) \chi_\Sigma (x)\,, %\quad \mathrm{where} \quad \alpha %> 0\,,
\end{equation}
where
\begin{equation}\label{eq-Sigma}
\Sigma := \{x= (x_1, x_2 )\,:\, x_1\in \mathbb R,\, x_2\in [0,d ]\}\,,
\end{equation}
and   $\chi_{\mathcal A}(\cdot)$  denotes the characteristic function of the set $\mathcal A \subset \mathbb R^2$, i.e. it is equal $1$ on $\mathcal A$ and $0$ on
 $\mathcal A \setminus \mathbb R^2$; moreover, $ \alpha \,:\, [0\,,d] \to \mathbb R$ stands for a piecewise continuous function. Following the notation introduced in~\cite{Exner21}, we will call the  interaction defined by $V_\alpha$ `soft waveguide' to emphasize the possibility of tunnelling out of $\Sigma$.
As we have already mentioned, apart from the  waveguide $\Sigma$, we introduce into the system a trap potential  localized on a  %\Hm{dimension $\Omega$}
compact set $\Omega \subset \mathbb R^2$
and defined by
%\begin{equation} \label{eq-potentialbeta1}
%    V_\beta = -\beta \mu_\Omega  (x)\,, \quad  \quad \beta > 0\,,
%\end{equation}
%where
a Radon measure $\mu_\Omega \equiv \mu  $  belonging to the Kato class and  supported on $\Omega$. We assume that $\Omega \cap \Sigma  = \emptyset$ and reserve  the special notation $\rho$ for  the distance $\rho:=\min_{x\in \Sigma , y\in \Omega } |x-y|\,$. %More specific properties of $\mu$ will be formulated in the further discussion.
The Hamiltonian  of the system can be defined as the operator  associated to  the sesquilinear form
\begin{equation}\label{eq-formalphabeta0}
     \mathcalligra { E}_{\alpha , \beta \mu} [f]= \int_{\mathbb R^2 } |\nabla f (x) |^2 \,\mathrm d x- \int_{\Sigma }\alpha(x_2)|f (x)|^2 \,\mathrm d x
- \beta  \int_{\mathbb R^2 }|I_\mu f (x)|^2 \,\mathrm d \mu (x)\,, \quad f\in W^{1,2 } (\R^2 )\,,
\end{equation}
where $W^{1,2}(\R^2 )$ stands for the Sobolev space
$W^{1,2}(\R^2 )= \{f\in L^{2}(\R^2 ): \partial_i f\in L^{2}(\R^2 )\,,\, i=1,2 \}$\footnote{The symbols $\partial_i$ stand for the distributional derivatives labelled by two cartesian variables if $\R^2$.  }
and
$I_\mu$ denotes the space embedding $I_\mu \,:\,  W^{1,2 } (\R^2 ) \to   L^2 _\mu \equiv L^2 (\mathbb R^2, \,\mathrm d \mu)$ and $\beta \in \mathbb R$.  The role of the parameter $\beta $ at this general stage is to  maintain consistency  with further notations.
%which is continuous, cf.~\cite{AF2003}.
The resulting Hamiltonian will be denoted  $H_{\alpha , \beta \mu}$.
%------------------
%Note that $\Omega $ belongs to $\mathbb R^2$, but it can be defined as  a set of a lower dimension; a more detailed discussion will be presented in further discussion.
%Introducing the parameter $\beta$ into the definition of $V_\beta$ may seem redundant at first sight, but it simplifies the %further analysis. % of further special cases.
%-----------------

%A mathematically precise definition of (\ref{eq-Hamalphabeta-gen}) can be given by means of sesquilinear forms and the embedding  theorem; we derive it in the course of this paper.

The perturbation determined by $\mu$
%(\ref{eq-potentialbeta1})
allows various  realisations that are interesting from a quantum point of view.  For example, we can consider a  continuous function $V\,:\, \mathbb R^2\to \mathbb R$   with the support $\Omega$. Then, for any Borel set $\mathcal B \subset \mathbb R^2$, we have $\mu (\mathcal B) = \int_{\mathcal B} V(x) \mathrm d x$.
Note that $\Omega \subset \mathbb R^2$ can be defined as  a set of a  lower dimension.
%; a more detailed discussion will be presented in further discussion.
%symbolically written
%Introducing the parameter $\beta$ into the definition of $V_\beta$ may seem redundant at first sight, but it simplifies the further analysis. % of further special cases.
%\begin{equation} \label{eq-Hamalphabeta-gen}
   % H_{\alpha, \beta} =
%   -\Delta + V_\alpha + \mu .
%\end{equation}
This leads to another important example of the measure $\mu $ defined by the delta potential, named also the delta interaction. To explain in more detail, assume that $\Omega $ is a finite line (loop or open arc)  in $\mathbb R^2$.  Then   $\mu (\mathcal B) = \mathrm {lin} (\mathcal B \cap \Omega  ) $, where $\mathrm{lin} (\cdot)$ stands for the linear measure. In this situation the perturbation formally corresponds to  $-\beta \delta (x-\Omega  )$, where $ \delta (\cdot -\Omega  )$ stands for the Dirac delta supported on $\Omega $. The examples of  perturbations belonging  to the Radon measure class  are shown in Fig.~\ref{fig:drawing}. For the clarity of the presentation  we will first demonstrate the reasoning for the simplified model, namely $\mu (\mathcal B) = \int_{\mathcal B\cap \Omega } \mathrm d x $, where  $\Omega \subset \mathbb R^2$ is  compact and $\alpha (\cdot ) =\alpha $, with $\alpha >0$.  For this  case we use a special notation for the Hamiltonian:
%which has the form
\begin{equation} \label{eq-Hamalphabeta-gen1}
   H_{\alpha, \beta} =
   -\Delta - \alpha \chi_\Sigma (x)  - \beta \chi_\Omega  (x)\,.
\end{equation}
%where $\chi_{\mathcal A} (\cdot )$. denotes the characteristic function of the set $\mathcal A$.
\begin{figure}
    \centering
    \includegraphics[width=0.7\textwidth]{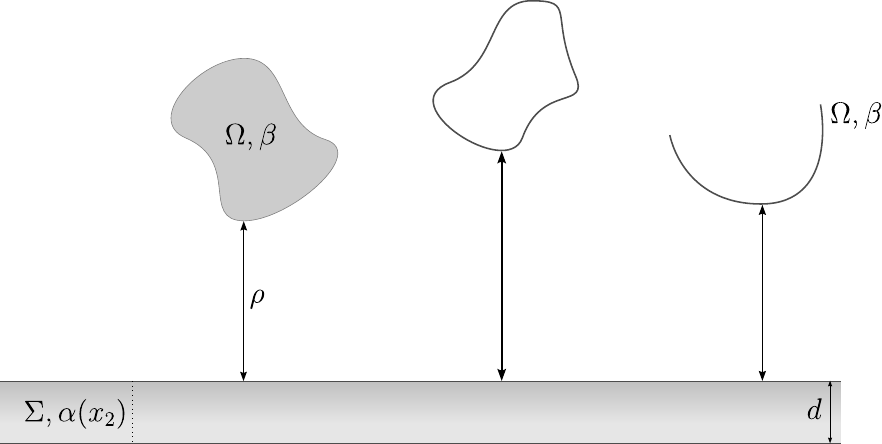}
    \caption{A schematic illustration of the model. $\Sigma$ represents the structure of the waveguide, and $\Omega$ represents the perturbation.}
    \label{fig:drawing}
\end{figure}
Modern nanotechnology allows the fabrication of semiconductors with different shapes and geometries. Leveraging a high mobility material such as $Ga_{1-x}Al_xAs$ we can produce a quasi one or two
dimensional electron gas in quantum wires, waveguides or layers. The question of what properties of semiconductor structures affect the quantum transport, is essential from  the perspective of the functioning of microelectronic devices. Inspired by the  question of the relation between architecture  of the semiconductor structure and spectral characteristics, we  investigate the resonant quantum model with an open channel and a trap inducing resonances due to the tunneling. Such a system is govened by a certain realization of $H_{\alpha, \beta \mu}$, for example by (\ref{eq-Hamalphabeta-gen1})  and our main aim is to understand how the distance between $\Sigma$ and $\Omega$, as a simple parameter of the system architecture, affects the resonances.

To take a closer look at  the spectral situation we auxiliary assume
that $\beta =0$ and consider
%
%In fact, the latter can be generalized by the Radon measure $\mu$ which belongs to Kato class and supported on $\Omega$. Then the potential (\ref{eq-potentialbeta}) reads
%For example, $\mu$ can be determined by the Dirac delta supported on a closed or unclosed planar curve with some smoothness conditions.
%Note that if $\beta = 0$ and
the system governed by $H_{\alpha, 0}$, i.e. which is employed only with the straight waveguide. The translational invariance allows to decompose $H_{\alpha, 0}$ onto longitudinal and transversal components. The latter  is determined by the one dimensional Hamiltionian $\mathrm h_\alpha $ with the potential $-\alpha (\cdot )\chi_{[0, d]}$. Obviously,  the essential spectrum of $\mathrm h_\alpha $ ranges from $0$ to infinity. In the following we assume that $\mathrm h_\alpha $ admits also discrete spectral point below $0$.
Let $\{E_{\alpha;1}\leq ...\leq E_{\alpha , N_\alpha }\}$  stand for  the complete set of the discrete eigenvalues of $\mathrm{h}_\alpha$; we denote also $\mathcal{N}_\alpha = \{1,2,..., N_\alpha \}$.
 Backing to the two  dimensional system with the  Hamiltonian $H_{\alpha, 0}$ we can state  that its spectrum covers
\begin{equation}\label{eq-essentialalphazero}
   \sigma (H_{\alpha, 0 }) =  \sigma_{\mathrm{ess}} (H_{\alpha, 0 }) = [E_\alpha \,,\infty ) \,,
   %\quad \quad E_\alpha = E_{\alpha ; 0}\,.
\end{equation}
% \mathrm{where}
where $E_\alpha = E_{1; \alpha }$ is the lowest eigenvalue of $\mathrm h _\alpha $.
Under certain conditions on $\mu$, we can also show that the essential spectrum remains stable under the perturbation which can be formally written as  $-\beta \mu $, i.e. $\sigma _{\mathrm{ess}} (H_{\alpha,  \beta \mu}) = [E_\alpha \,,\infty ) $. However, it can produce discrete eigenvalues below the threshold of the essential spectrum.

%We set a special notation $\rho$ for  the distance $\rho:=\min_{x\in \Sigma , y\in \Omega } |x-y|\,$. The Hamiltonian we are interested acts in $L^2(\mathbb{R}^2)$ and can be formally written as \Hm{notation}
%\begin{equation} \label{eq-Hamalphabeta-gen}
 %   H_{\alpha, \beta} = -\Delta + V_\alpha + V_\beta\,.
%\end{equation}

The main topic of this paper it devoted to the problem of resonances that may emerge if the distance $\rho$ is very large. The resonances and the related metastable states
are a kind of spectral memory about the decoupled systems $H_{\alpha , 0}$ and  $H_{0,\beta \mu}$, which formally corresponds to  $\rho=\infty $.
 To explain in more detail, assume that $\alpha = 0$ and then the corresponding Hamiltonian $H_{0,\beta\mu }$ has the eigenvalues $\{\mathcal{E}_{\beta;1 } \leq ...\leq \mathcal{E}_{\beta;N_\beta }\} $, $\mathcal{N}_\beta := \{1,..,N_\beta \}$ with a corresponding eigenfunction $\{\omega_{\beta;n} \}_{n\in \mathcal N_\beta }$.
 %; (without a danger of confusion we skip the index related to $\mu$ here).
 If for a given $n$ there exists $\mathcal E_{\beta ; n}$ such that  $\mathcal{E}_{\beta;n} > E_\alpha$  then we may  expect that a metastable state with the resonant energy $\mathcal{E}_{\beta;n}$ can  appear in the system governed by $H_{\alpha, \beta \mu}$. This will be manifested as a second sheet resolvent pole  residing in a certain neighbourhood of $\mathcal{E}_{\beta;n}$.

\emph{Main results of the paper.} The main result of this paper can be formulated as follows.
\begin{itemize}
    \item Assume $\mathcal{E}_{\beta;n}$ is a simple eigenvalue of $H_{0, \beta \mu}$ and it belongs to
     $(E_{\alpha;j}, E_{\alpha;j+1})$ for $j\in \mathcal N_\alpha $. The resolvent of  $H_{\alpha, \beta \mu }$ has a second sheet analytic continuation to the lower half-plane and  there exists a unique resolvent pole in a neighbourhood of $\mathcal{E}_{\beta;n} $  that admits the asymptotics
 \begin{equation} %\label{ieq-asympI}
   z_{n}(\rho )= \mathcal E_{\beta ; n}+\mathcal O \Big(\frac{\e^{-\sqrt{2 |\mathcal E_{\beta ;n}|} \rho}  }{\rho }\Big)\,.
\end{equation}
%where  $\mathcal{E}_{\beta;n} \in (E_{\alpha;j}, E_{\alpha;j+1})$ for $j\in \mathcal N_\alpha $.
\item The dominated term of the imaginary component of pole takes the form
\begin{align} \label{eq-Gamman0-intro}
\begin{split}\Gamma _n (\rho )=  -\beta ^2 \sum_{k=1}^j\frac{\pi}{ 2\sqrt{ \mathcal E_ {\beta;n }- E_{\alpha; k}  }
} & \Big[ \Big| \int_{\Omega } I_\mu (\omega_ {\beta; n } (x)^{\ast } \varphi_{k} (x,p_1 ))\mathrm d \mu (x) \Big|^2 _{p_1=(\mathcal E_{\beta; n}- E_{\alpha; k } )^{1/2 }} \\ &+ \Big| \int_{\Omega } I_\mu (\omega_ {\beta; n } (x)^{\ast } \varphi_{k} (x,p_1 ))\mathrm d \mu (x) \Big|^2 _{p_1=-(\mathcal E_{\beta; n}- E_{\alpha; k } )^{1/2 }} \Big]\,,
\end{split}
\end{align}
where $\varphi_{k}(x,p_1)$ are the generalized eigenfunctions of $H_{\alpha,0}$ and let  us recall that $\omega_{\beta;n}$ stand for the eigenfunctions of $H_{0, \beta \mu}$; the above integrals depend on the argument $p_1$ and the indexes mean that they are determined at the specific points $p_1=\pm (\mathcal E_{\beta; n}- E_{\alpha; k } )^{1/2 }$.
For the special case of Hamiltonian defined by (\ref{eq-Hamalphabeta-gen1}),
the dominated term turns to
\begin{align} \label{eq-Gamman0}
\begin{split}
\Gamma _n (\rho )=  -\beta ^2 \sum_{k=1}^j\frac{\pi}{ 2\sqrt{ \mathcal E_ {\beta;n }- E_{\alpha; k}  }
} & \Big[ \Big| \int_{\Omega }\omega_ {\beta; n } (x)^{\ast } \varphi_{k} (x,p_1 )\mathrm d x \Big|^2 _{p_1=(\mathcal E_{\beta; n}- E_{\alpha; k } )^{1/2 }} \\ &+  \Big| \int_{\Omega }\omega_ {\beta; n } (x)^{\ast } \varphi_{k} (x,p_1 )\mathrm d x \Big|^2 _{p_1=-(\mathcal E_{\beta; n}- E_{\alpha; k } )^{1/2 }} \Big]\,.
\end{split}
\end{align}
\end{itemize}
 From the physical point of view, the imaginary component  of pole plays the main rule in the time evolution of metastable state.
Fermi's golden rule plays an essential role in physics.
 It describes the rate of transition from an energy eigenstate of a quantum system to an energy continuum as an effect of a weak perturbation.
The corresponding formula was, first time, obtained by Dirac \cite{Dirac} and then, after more than twenty years, it was analyzed and called as `golden rule' by Fermi \cite{Fermi}. From the physical perspective, distantly located trap allows for controlling scattering processes and quantum transport in the waveguide, influencing tunneling and dissipative processes. The distance between the trap and the waveguide can be adjusted to impact the system dynamics.
Studying these decay mechanisms, especially through explicit derivations like Fermi's golden rule, is crucial for optimizing such systems. This provides insights into key components of quantum technology
such as wires and waveguides,  and facilitates the optimization of related processes.

We would like to conclude this introduction by saying a few words about how the results obtained in this paper fit into a wider context. The work belongs to the line of research that is dedicated  to the  exploration of the
spectral properties of quantum systems with soft waveguides. These types of systems
have been analyzed mainly from the point of view of the existence of bound states, cf.~\cite{ Exner21, Exner22, EKL23, ExnerS2023, ExnerV, Teufel2, KK22}. The problem of resonances in soft waveguides remains a lot of questions open.  It is worth mentioning that  the problem of resonances induced by  breaking the symmetry in the  soft waveguide systems has been analyzed in \cite{Kondej2024} by
the first author of the present paper.
In \cite{Kondej2024} the resonances were memories of the embedded eigenvalues
preserved by symmetry of the system. In the current paper, the resonances are induced  by the tunneling, which means that the strategy for recovering resonance pole  is necessarily not the same, although the main technical concepts are borrowed from \cite{Kondej2024}.
On the other hand, in the present paper we discuss systems
with  a general class of tunnel traps  defined by the Kato class measures.
This aspect is interesting from both mathematical and physical perspectives. First, it demonstrates that the resonance techniques (partially developed in \cite{Kondej2024}) can be extended to a much broader scenario. Moreover, it shows that the trap can be defined not only by a regular potential but also by lower-dimensional structures, such as wires. The latter is interesting, in particular, from the physical point of view.
Additionally, the present paper provides an explicit formulation of Fermi's golden rule for this generalized case, which in \cite{Kondej2024} was presented only implicitly in terms of the spectral resolution, a formulation less familiar to physicists.

Note that  the problem of resonances induced by the effect of tunnelling in the model with an infinite straight line and a distant point interaction has been analyzed in
\cite{ExnerKondej04}
%The present  work  provides a continuation and, in our opinion, an essential development of the concept initiated in~\cite{ExnerKondej04}.
and, on the other hand,  the resonances in the three dimensional systems with point potentials have been investigated, for example in~\cite{CAZEG,LipLot}, see also~\cite{AGHH}.
%\Hm{more, ostatnia praca}
It is also worth mentioning that problem of resonances
in, so called, hard wall (Dirichlet) waveguides   has been undertaken in series of papers,  cf.~\cite{BG, DeGr, DEM, KovarikS, KSJC}  see also references in~\cite{EK2015}. However, in these systems, contrary to  the model discussed in this paper, a particle cannot escape from the waveguide because of the infinite potential barrier at the boundary.
%the possibility of tunneling %the hard wall waveguides restrict \Hm{??}

The paper is organized as follows: In Section~\ref{sec-description} we provide the description of Hamiltonians $H_{\alpha , 0}$, $H_{\alpha , \beta }$ and  $H_{\alpha , \beta }$  corresponding to~(\ref{eq-Hamalphabeta-gen1}). We state preliminary facts about the spectra and derive the resolvents of the investigated Hamiltonians. In Section~\ref{sec-resonances} we show that, under certain conditions,  the resolvent of $H_{\alpha, \beta }$ has
a unique second sheet pole and derive its asymptotics. In Section~\ref{sec-Fermi} we analyze the imaginary part of the pole and show that Fermi's golden rule is fulfilled in the
form of~(\ref{eq-Gamman0}). Finally, in Section~\ref{sec-Kato} we generalize the result for large class of potential defined by $H_{\alpha, \beta \mu }$ and related to (\ref{eq-formalphabeta0}).

\section{Description of simplified  model: influence of distant well on a straight waveguide} \label{sec-description}
In this section, we consider a two dimensional quantum mechanical model with the potential supported on a straight, infinite planar waveguide and a distant well potential located on a compact set $\Omega \subset \mathbb{R}^2$ with the smooth boundaries $\partial \Omega$. We assume that the waveguide has a width $d$, therefore it can be defined as follows
\begin{equation}\label{eq-Sigma}
\Sigma := \{x= (x_1, x_2 )\,:\, x_1\in \mathbb R,\, x_2\in [0,d ]\}\,.
\end{equation}
Furthermore, we assume that $\Omega \cap \Sigma =\emptyset $ and denote the distance
\begin{equation}\label{eq-defrho}
\rho:=\min_{x\in \Sigma , y\in \Omega } |x-y|\,.
\end{equation}
Due to the orientation of $\Sigma $ we can also write $\rho=\min_{x\in \Sigma , y\in \Omega } |x_2-y_2|$.
The parameter $\rho$ will play the essential role in the further discussion. % \color{ {red} Keeping the shape $\Omega$ and its orientation fixed, we will vary the distance $\rho$ and move with
%$\Omega$ rigidly along the axis $x_2$}

Let $\alpha$ and $\beta$ be positive numbers, and using them, we define the Hamiltonian
\begin{equation}\label{eq-Hamiltonianalphabeta}
    H_{\alpha, \beta }= -\frac{d^2}{dx_1^2}-\frac{d^2}{dx_2^2} - \alpha \chi_\Sigma (x)  -\beta \chi_\Omega (x)\,:\, W^{2,2} (\mathbb R^2)\to L^2 (\mathbb R^2)\,,
\end{equation}
where the Sobolev space $W^{2,2} (\mathbb R^2)=  \{f\in L^{2}(\R^2 ): \partial_i \partial_j f\in L^{2}(\R^2 )\,,\, i, j\in\{1,2\} \}$ determines the maximal domain. In fact, the above Hamiltonian depends on $\Sigma$ and $\Omega$ as well, however, to avoid over-indexing, we do not involve $\Sigma$ and $\Omega$ into the notation for the Hamiltonian. $H_{\alpha, \beta }$ defined by (\ref{eq-Hamiltonianalphabeta}) determines a particular example of (\ref{eq-potentialalpha}).

\subsection{Insight into a  strip model}

We auxiliary consider the situation  $\beta =0$, i.e. we remove the well potential supported on $\Omega$, leaving only the strip interaction. The corresponding Hamiltonian $H_{\alpha, 0}$ has the translational symmetry, so we can decouple it  into  transverse and longitudinal components. The latter  carries    the information only about the kinetic energy. The transverse component of Hamiltonian $H_{\alpha, 0}$ is defined by
\begin{equation} \label{eq-trasvHam}
    \mathrm h_\alpha = -\frac{d^2}{dx_2^2} - \alpha \chi_{[0, d]} (x_2) \,:\, W^{2,2} (\mathbb R)\to L^2 (\mathbb R)\,,
\end{equation}
which determines the standard one dimensional Hamiltonian with a rectangular well of depth $\alpha $. The negative eigenvalues and eigenfunctions of
$\mathrm h _\alpha $ formed in the ascending order will be denoted
\begin{equation}\label{eq-evaluesalpha}
\{E_{\alpha ; 1}\leq E_{\alpha ; 2}\leq ...\leq E_{\alpha ; N_\alpha} \}\,,
\end{equation}
%and _{{n\in {\mathcal N_\alpha }}}\,,\,\,\,\mathrm{and  }\,\,\,\,\
and
\begin{equation}\label{eq-N_alpha}\{\phi_{\alpha ; n}\}_{n\in {\mathcal N_\alpha }}
\,,\,\,\,\,\,\,\, \mathcal N_\alpha := \{1,...N_\alpha \}\,
\end{equation}
stand for the corresponding eigenfunctions.
Then  $H_{\alpha, 0}$  can be represented as
\begin{equation} \label{eq-trasldecom}
    H_{\alpha, 0} = -\frac{d^2}{dx_1^2} \otimes \mathrm I +\mathrm I \otimes \mathrm{h}_\alpha \,.
\end{equation}
Consequently, the spectrum of $H_{\alpha, 0 } $ admits only the essential component of the form
\begin{equation}\label{eq-essentialalpha}
    \sigma_{\mathrm{ess}} (H_{\alpha, 0 }) = [E_\alpha \,,\infty ) \,, \quad \mathrm{where} \quad E_\alpha = E_{\alpha ; 1}\,.
\end{equation}

\noindent \emph{Notation.} In the following discussion we will use the notation $\mathrm I$ for the unit operator independently of space where it acts.
\subsubsection{Resolvent of $H_{\alpha, 0}$}
In this section  we are going to derive the resolvent of $H_{\alpha, 0}$ which will be the basic tool in the analysis of resonances.
Let us recall that the 'free' resolvent is given by \( R(z) = (-\Delta - z)^{-1} \), and it can be represented as an integral operator with a kernel defined by the Macdonald function \( K_0(\cdot) \), also known as the modified Bessel function of the second kind.\footnote{Alternatively, the kernel of \( R(z) \) can be expressed using the Hankel function as \( \frac{i}{4} H_0(\sqrt{z}|\cdot - \cdot|) \). Specifically:
\begin{equation} \label{eq-Macdonald}
R(z) f = \frac{1}{2\pi} \int_{\mathbb{R}^2} K_0(k_z |\cdot - y|) f(y) \, \mathrm{d}y, \quad f \in L^2(\mathbb{R}^2),
\end{equation}
where \( k_z = -i \sqrt{z} \) with \( \Im \sqrt{z} > 0 \) and the kernel of \( R(z) \) defines the Green's function of the operator \( -\Delta - z \).
}

%\subsection{Model with solitary well potential }
In the present case, when the potential is defined by a negative coupling constant $-\alpha $, the Birman--Schwinger operator  is  given by $
-\alpha \chi_{\Sigma }R(z)\chi_{\Sigma }\,:\, L^2 (\Sigma )\to
L^2 (\Sigma )\,.$ Consequently, for $z$ living in the   resolvent set $\varrho (H_{\alpha, 0})$ the  operator
\begin{equation}\label{BS_kernel}
\mathrm I -\alpha \chi_{\Sigma }R(z)\chi_{\Sigma }\,:\, L^2 (\Sigma )\to
L^2 (\Sigma )
\end{equation}
is invertible.
Using the Birman--Schwinger concept we represent the resolvent  $R_{\alpha, 0 }(z) = (H_{\alpha, 0 }-z)^{-1}$ as
\begin{equation} \label{eq-BSralpha}
    R_{\alpha, 0 }(z)= R(z) +\alpha R(z) \chi_{\Sigma }[\mathrm I -\alpha \chi_{\Sigma }R(z)\chi_{\Sigma }]^{-1}\chi_{\Sigma } R(z)\,,
\end{equation}
cf.~\cite{BEG, BEKS, HamsmanKrejcirik, Po1, Simon-comprehensive}.

To proceed further we are going to  present the resolvent $R_{\alpha, 0 }(z)$ in an alternative  form that reveals transverse symmetry and reflexes  the decomposition (\ref{eq-trasldecom}); a similar construction has been carried out in \cite{EKL23, Kondej2024}. Note that the spectral family (equivalently, the resolution of the identity) associated to the operator  $\mathrm h_\alpha $ is determined the  eigenprojectors
$P_{\alpha; n } = (\phi_{\alpha;n }, \cdot )_{L^2 (\mathbb R)}\phi_{\alpha;n }$ and the component corresponding to the continuum of $\mathrm h_\alpha $ which we denote  $E_\alpha (\cdot ).$
%defined  by the generalized eigenvectors $\psi_\alpha (\cdot ,\cdot )$.
%$$$$
%\sum_{n \mathcal N_\alpha }P_{\alpha; n } \oplus
%E_\alpha (\cdot )
%$$
%where $P_{\alpha; n } $ are eigenprojectors corresponding to $\phi_{\alpha ; n}$
Note that the longitudinal component of \( H_{\alpha, 0} \), defined by \( -\frac{d^2}{dx_1^2} \otimes \mathrm{I} \) on \( L^2(\mathbb{R}) \otimes L^2(\mathbb{R}) \), reproduces the spectral family for the operator \( -\frac{d^2}{dx_1^2} \) acting in \( L^2(\mathbb{R}) \). We denote it as \( E_0(\cdot) \), and the explicit form of \( E_0(\cdot) \) is given by:
\begin{equation}\label{eq-spectal-family}
E_0(\lambda) f = \frac{1}{(2\pi)} \int_{|\xi|^2 \leq \lambda} f(\xi) e^{i x \cdot \xi} \, \mathrm d\xi.
\end{equation}
%}.
 %which can be reconstructed  by means of the the plane waves %$(2\pi)^{-1/2}\e^{ix_1p_1}$.
By means of the introduced notation we can write
\begin{equation}\label{eq-Ham-res}
H_{\alpha, 0} \, (\mathrm I \otimes P_{\alpha ;n} )=  \int_{[0, \infty )}(\lambda + E_{\alpha ;n} )\,\mathrm d E_0 (\lambda ) \otimes P_{\alpha ;n}\,.
\end{equation}
This, consequently, allows to define components
%$R^d _{\alpha, 0} (z)$ and $R^c _{\alpha, 0} (z)$
of $R_{\alpha, 0} (z)$
related to the projection onto
$L^2 (\mathbb R)\otimes P_\alpha L^2 (\mathbb R)$, where  $P_{\alpha }= \sum_{n\in \mathcal N_\alpha } P_{\alpha, n}$, and
$L^2 (\mathbb R)\otimes P_\alpha^\bot L ^2 (\mathbb R)$, where $P_\alpha^\bot$ stands for the orthogonal complement of $P_\alpha$.
Precisely,
\begin{equation} \label{eq-Rd0}
R^d _{\alpha, 0}(z) := R _{\alpha, 0}(z) \,  (\,\mathrm I \otimes P_\alpha\,) =\sum_{n\in \mathcal N_\alpha }\int_{[0, \infty )} \frac{\mathrm d E _0(\lambda )}{\lambda + E_{\alpha; n } -z }\otimes P_{\alpha; n }\,
  \end{equation}
and
\begin{equation} \label{eq-Rc}
R^c _{\alpha, 0}(z) := R _{\alpha, 0}(z) \, (\,\mathrm I \otimes P_\alpha^\bot \,= \int_{[0, \infty )} \int_{[0, \infty )} \frac{\mathrm d E_0(\lambda ) \otimes
\mathrm d  E_\alpha  (\lambda ')
}{\lambda +\lambda '-z }\,.
\end{equation}
Note that   $E_0 (\cdot )$ can be also expanded by means of the plane waves $(2\pi)^{-1/2}\e^{ix_1p_1}$ which yields
\begin{equation} \label{eq-Four}
\Big(-\frac{d^2}{dx_1^2} + E_{\alpha;n} -z \Big)^{-1} f (x_1)=\int_{[0, \infty )} \frac{\mathrm d  E_0  (\lambda )}{\lambda + E_{\alpha; n } -z } f(x_1)=
\frac{1}{2\pi }\int_\mathbb R  \, \int_\mathbb R
\frac{\e^{i p_1
(x_1 -y_1)  }}{p_1^2 + E_{\alpha; n } -z  }
%\phi_{\alpha ; n } (x_2) \phi_{\alpha ; n } (y_2)
f(y_1)
\,\mathrm d p_1 \,\mathrm d y_1 \,.\end{equation}
Furthermore, the one dimensional Green's function which appears in the last equivalence admits also representation
\begin{equation} \label{eq-Green1D}\frac{1}{2\pi }\int_\mathbb R  \,
\frac{\e^{i p_1
(x_1 -y_1)  }}{p_1^2 + E_{\alpha; n } -z  }
%\phi_{\alpha ; n } (x_2) \phi_{\alpha ; n } (y_2)
\,\mathrm d p_1= \frac{i}{2} \frac{\e^{i\tau_{\alpha ; n} (z)
|x_1 -y_1|  }}{\tau_{\alpha ; n} (z) } \,,
\end{equation}
where $
\tau_{\alpha ; n} (z):= \sqrt {z - E_{\alpha ;n }}$  with the square root determined on the first analytical sheet, i.e. $\Im \tau_{\alpha ; n} (z)>0$.
Summarizing, we get the decomposition
%(\ref{eq-trasldecom}) we can
%represent the resolvent of $H_{\alpha,0}$ as
\begin{equation} \label{eq-Rdc}
    R_{\alpha, 0 }(z)=R_{\alpha, 0 }^d(z)+R_{\alpha, 0 }^c(z)\,,
\end{equation}
where %\Hm{constants}
\begin{align}\begin{split}\label{eq-Rd}
R_{\alpha, 0 }^d(z) f(x)= &
\frac{1}{2\pi}\sum_{n\in \mathcal N_\alpha }\int_{\mathbb R^2}  \int_\mathbb R
\frac{\e^{i p_1
(x_1 -y_1)  }}{p_1^2 + E_{\alpha; n } -z  }
\phi_{\alpha ; n } (x_2) \phi_{\alpha ; n } (y_2) ^\ast f(y)\,\mathrm d p_1 \,\mathrm d y
\\ = &
\frac{i}{2}\sum_{n\in \mathcal N_\alpha }\int_{\mathbb R^2} \frac{\e^{i\tau_{\alpha ; n} (z)
|x_1 -y_1|  }}{\tau_{\alpha ; n} (z) }
\phi_{\alpha ; n } (x_2) \phi_{\alpha ; n } (y_2)^\ast  f(y) \,\mathrm d y \,,
\end{split}
\end{align}
where $y=(y_1, y_2)$
%%we define   $
%\tau_{\alpha ; n} (z):= \sqrt {z - E_{\alpha ;n }}$  with the square root %determined on the first analytical sheet, i.e. $\Im \tau_{\alpha ; n} (z)>0$
and %\Hm{do we need this}
\begin{equation}\label{eq-Rc}
R_{\alpha, 0 }^c(z) f(x)= \frac{i}{2}\int_\mathbb R \int_\mathbb R \frac{\e^{i\tau_{\alpha } (z, p_2)
|x_1 -y_1|  }}{\tau_{\alpha } (z, p_2) }
\psi_{\alpha  } (x _2, p_2) \psi_{\alpha } (y_2, p_2) f(y) \,\mathrm d y\,\mathrm d p_2\,,
 \end{equation}
 where $\tau_{\alpha } (z, p_2) :=  \sqrt {z  - p_2^2}$ with $\Im \sqrt {z  - p_2^2}>0$, and $\psi_{\alpha } (\cdot , \cdot)$ stand for the generalized eigenfunctions of $\mathrm{h_\alpha}$ (constructed from plane waves). The similar decomposition were used in \cite{EKL23}.

\subsection{Quantum well system without straight strip}
In order to collect the most important information about the spectral properties of the system governed by $H_{\alpha, \beta }$, in this section we will auxiliary consider $H_{0, \beta }$.
%, which is a special case of~(\ref{eq-Hamiltonianalphabeta}).
Since
the potential is negative we can conclude that  the negative discrete spectrum
is non empty, cf.~\cite[XIII.11]{RS4}. The  discrete eigenvalues of $H_{0, \beta }$ formed in the ascending order will be denoted as
\begin{equation}\label{eq-eigenvalubeta}
\{\mathcal E_{\beta ; 1}\leq  \mathcal E_{\beta ; 2}\leq ...
\leq \mathcal E_{\beta ; N_\beta }\}
\end{equation}
and, let
\begin{equation}\label{eq-eigenvalubeta1}
\{\omega_{\beta  ; n}\}_{n\in {\mathcal N_\beta  }}\,,\,\,\,\,\,\,\, \mathcal N_\beta := \{1,...N_\beta  \}\,,
\end{equation}
stand for the corresponding eigenfunctions.

We again use the Birman--Schwinger resolvent $R_{0, \beta}(z) = (H_{ 0 , \beta }-z)^{-1}$ which in this case takes the form
\begin{equation} \label{eq-betaresolvent1}
    R_{0,\beta  }(z)= R (z) +\beta  R(z) \chi_{\Omega  }(\mathrm I -\beta  \chi_{\Omega  }R(z)\chi_{\Omega  })^{-1}\chi_{\Omega  } R(z)\,, \quad  z\in  \varrho (H_{0,\beta }) \,.
\end{equation} %\Hm{resolvent set - earlier }
The Birman--Schwinger principle allows to associate zeros of $\mathrm I -\beta  \chi_{\Omega  }R(\mathcal E_{\beta ;n})\chi_{\Omega  }\,:\,L^2 (\Omega) \to L^2 (\Omega )$ and eigenfunctions $\omega_{\beta; n }$ of $H_{0,\beta }$ in the following way
\begin{equation}\label{eq-birmanschwingeromega}
    \omega_{\beta; n} = R(\mathcal E_{\beta; n })\chi_{\Omega }w_{\beta; n}\,, \quad \mathrm{where} \quad w_{\beta; n} \in \ker [\mathrm I -\beta  \chi_{\Omega  }R(\mathcal E_{\beta ; n})\chi_{\Omega  }]\,  \quad \text{and} \quad n\in \mathcal N_\beta \,,
\end{equation}
cf.~\cite{BEG, BEKS, HamsmanKrejcirik, Po1}. %\Hm{simplicity}
Let $\mathcal S_n$ denote a sufficiently small neighbourhood of  $\mathcal E_{\beta; n}$. With a view to further discussion we prove the following statement.
%Now we will discuss in more detail the behaviour of $(\mathrm I -\beta \chi_{\Omega }R(z)\chi_{\Omega })^{-1}$ for $z\neq \mathcal E_{\beta; n}$ but in a small neighbourhood $\mathcal S_n$ of $\mathcal E_{\beta; n}$; this notation will be used throughout the rest of the discussion.
%In the following discussion
%we will use the

\begin{lemma} Assume that $\mathcal E_{\beta ; n}$ is a simple discrete eigenvalue of $H_{0,\beta }$ and  $z\in \mathcal S_n\setminus \{ \mathcal E_{\beta; n}\}$. Then
   the  operator $(\mathrm I -\beta  \chi_{\Omega  }R(z)\chi_{\Omega  })^{-1}\,:\,L^2 (\Omega )\to L^2 (\Omega )$ admits the representation
    \begin{equation}\label{eq-birmanschwingerpoles}
(\mathrm I -\beta  \chi_{\Omega  }R(z)\chi_{\Omega  })^{-1} = \frac{1}{\beta } \frac{1}{\mathcal E_{\beta ; n } - z}\mathrm P_{\beta; n }+A_n(z)\,,
\end{equation}
where $\mathrm P_{\beta; n }= (w_{\beta; n}\,, \cdot )_{L^2 (\Omega )}w_{\beta; n}$ and $A_n(z)$ is  analytic in  $\mathcal S_n$.
\end{lemma}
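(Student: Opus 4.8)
The plan is to regard $T(z):=\mathrm I-\beta\chi_{\Omega}R(z)\chi_{\Omega}$ as an analytic family of Fredholm operators on $L^2(\Omega)$ and to extract the pole of $T(z)^{-1}$ at $\mathcal E_{\beta;n}$ by analytic perturbation theory, reading off the residue from a single scalar factor. First I would record the analytic and spectral input. Since $\Omega$ is compact and the kernel $\tfrac{1}{2\pi}K_0(k_z|x-y|)$ of $R(z)$ is jointly continuous off the diagonal with only a logarithmic, hence locally square-integrable, singularity on it, the operator $K(z):=\beta\chi_{\Omega}R(z)\chi_{\Omega}$ is Hilbert--Schmidt on $L^2(\Omega)$ and depends analytically on $z$ throughout $\mathcal S_n$, which we may take inside $\C\setminus[0,\infty)$ because $\mathcal E_{\beta;n}<0$. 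At $z=\mathcal E_{\beta;n}$ the Birman--Schwinger correspondence (\ref{eq-birmanschwingeromega}) together with the assumed simplicity of $\mathcal E_{\beta;n}$ gives that $1$ is a \emph{simple} eigenvalue of $K(\mathcal E_{\beta;n})$ with eigenvector $w_{\beta;n}$, i.e. $\ker T(\mathcal E_{\beta;n})=\mathrm{span}\{w_{\beta;n}\}$. As $\mathcal E_{\beta;n}$ is real, $R(\mathcal E_{\beta;n})$ and thus $K(\mathcal E_{\beta;n})$ are self-adjoint, so $w_{\beta;n}$ is simultaneously the left eigenvector.

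Next I would isolate the singular direction by the Riesz projection. Let $\mu(z)$ be the eigenvalue of $K(z)$ near $1$ and $\Pi(z)$ the associated rank-one spectral projection; both are analytic near $\mathcal E_{\beta;n}$, with $\mu(\mathcal E_{\beta;n})=1$ and $\Pi(\mathcal E_{\beta;n})=\|w_{\beta;n}\|^{-2}(w_{\beta;n},\cdot)_{L^2(\Omega)}\,w_{\beta;n}$. On $\mathrm{Ran}(\mathrm I-\Pi(z))$ the operator $T(z)$ stays boundedly invertible with an analytic inverse, whereas on the one-dimensional range of $\Pi(z)$ it acts as multiplication by $1-\mu(z)$. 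This produces
\begin{equation}
T(z)^{-1}=\frac{1}{1-\mu(z)}\,\Pi(z)+B_n(z),
\end{equation}
with $B_n(z)$ analytic on $\mathcal S_n$, so the entire singularity is carried by the scalar factor $1-\mu(z)$.

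The crucial and only delicate step is to show that $1-\mu(z)$ has a \emph{simple} zero and to pin down the residue. Using $R'(z)=R(z)^2$ one has $K'(z)=\beta\chi_{\Omega}R(z)^2\chi_{\Omega}$, and first-order perturbation theory for the simple eigenvalue gives
\begin{equation}
\mu'(\mathcal E_{\beta;n})=\frac{(w_{\beta;n},K'(\mathcal E_{\beta;n})w_{\beta;n})_{L^2(\Omega)}}{\|w_{\beta;n}\|^2}=\frac{\beta\,\|R(\mathcal E_{\beta;n})\chi_{\Omega}w_{\beta;n}\|^2}{\|w_{\beta;n}\|^2}=\frac{\beta\,\|\omega_{\beta;n}\|^2}{\|w_{\beta;n}\|^2},
\end{equation}
where I used self-adjointness of $R(\mathcal E_{\beta;n})$ and the identity $\omega_{\beta;n}=R(\mathcal E_{\beta;n})\chi_{\Omega}w_{\beta;n}$. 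Since $\beta>0$ this is strictly positive, so $1-\mu(z)=-\mu'(\mathcal E_{\beta;n})(z-\mathcal E_{\beta;n})+\mathcal O((z-\mathcal E_{\beta;n})^2)$ and the pole is simple. Expanding $\tfrac{1}{1-\mu(z)}\Pi(z)$ and keeping $\Pi(\mathcal E_{\beta;n})$ yields the residue
\begin{equation}
\frac{1}{\mu'(\mathcal E_{\beta;n})}\,\Pi(\mathcal E_{\beta;n})=\frac{(w_{\beta;n},\cdot)_{L^2(\Omega)}\,w_{\beta;n}}{\beta\,\|\omega_{\beta;n}\|^2}=\frac{1}{\beta\,\|\omega_{\beta;n}\|^2}\,\mathrm P_{\beta;n},
\end{equation}
with $\mathrm P_{\beta;n}=(w_{\beta;n},\cdot)_{L^2(\Omega)}w_{\beta;n}$. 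With the eigenfunction normalized, $\|\omega_{\beta;n}\|=1$, this collects into $\tfrac{1}{\beta}\tfrac{1}{\mathcal E_{\beta;n}-z}\mathrm P_{\beta;n}$; absorbing all regular contributions into $A_n(z)$ gives (\ref{eq-birmanschwingerpoles}). I would note that $\mathrm P_{\beta;n}/\|\omega_{\beta;n}\|^2$ is invariant under rescaling $w_{\beta;n}$, so the coefficient $1/\beta$ is unambiguous. The main obstacle is precisely this residue/transversality computation — verifying $\mu'(\mathcal E_{\beta;n})\neq0$ and extracting the exact constant $1/\beta$; the analyticity, compactness and meromorphy (analytic Fredholm) ingredients are standard.
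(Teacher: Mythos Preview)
Your proof is correct but follows a genuinely different route from the paper. The paper argues top-down: it writes the Laurent expansion $R_{0,\beta}(z)=\frac{1}{\mathcal E_{\beta;n}-z}P_{\beta;n}+B_n(z)$ from the spectral resolution of the self-adjoint operator $H_{0,\beta}$, writes a generic Laurent expansion $(\mathrm I-\beta\chi_\Omega R(z)\chi_\Omega)^{-1}=\frac{1}{\mathcal E_{\beta;n}-z}\tilde P_n(z)+\tilde B_n(z)$ (citing \cite{GHN2015}), plugs both into the Birman--Schwinger resolvent formula (\ref{eq-betaresolvent1}), and matches the singular parts together with the relation $P_{\beta;n}=R(\mathcal E_{\beta;n})\mathrm P_{\beta;n}R(\mathcal E_{\beta;n})$ to read off $\tilde P_n(\mathcal E_{\beta;n})=\beta^{-1}\mathrm P_{\beta;n}$. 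You instead work bottom-up, entirely inside $L^2(\Omega)$: you track the simple eigenvalue $\mu(z)$ of $K(z)=\beta\chi_\Omega R(z)\chi_\Omega$ near $1$ by Kato--Riesz perturbation theory and compute the derivative $\mu'(\mathcal E_{\beta;n})=\beta\|\omega_{\beta;n}\|^2/\|w_{\beta;n}\|^2$ directly from $R'(z)=R(z)^2$. Your approach is more self-contained---it makes the simplicity of the pole explicit via the transversality $\mu'\neq 0$ rather than inheriting it from the resolvent of $H_{0,\beta}$---and it cleanly exposes where the normalization $\|\omega_{\beta;n}\|=1$ enters, which the paper leaves implicit. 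The paper's route is shorter once one is willing to quote the Laurent form of a self-adjoint resolvent at a simple eigenvalue, and avoids the eigenvalue-derivative computation altogether.
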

\begin{proof}
Note that  $R_{0, \beta } (z)$ is the resolvent of a self adjoint operator
with isolated spectral point at $\mathcal E_{\beta; n }$ with corresponding eigenvector $\omega_{\beta; n}$ and eigenprojector  $P_{\beta;n }$.
%, where $P_{\beta;n}$ is the projector $P_{\beta ;n } = (\omega_{\beta; n}, \cdot )_{L^2 (\mathbb R^2)}\omega_{\beta; n}$.
Thus, for $z\in \mathcal S_n\setminus \{ \mathcal E_{\beta; n}\}$ we can write
\begin{equation} \label{eq-betaresolvent2}
R_{0, \beta } (z)= \frac{1}{\mathcal E_{\beta ; n } - z}P_{\beta ;n }+B_n (z)\,:\,L^2 (\mathbb R^2)\to L^2 (\mathbb R^2)\,,
\end{equation}
where   $B_n(z)$ acts in $L^2 (\mathbb R^2 )$ and it is analytic in $\mathcal S_n$; the above follows from the spectral  resolution of the $R_{0, \beta  }(z)$, cf.~\cite[V-(3.21)]{Kato}. The singularities coming from the remaining isolated points contribute to $B_n(z)$.
On the other hand, since $(\mathrm I -\beta  \chi_{\Omega  }R(z)\chi_{\Omega  })^{-1}$ has the same pole of the multiplicity one, it can be decomposed as
\begin{equation}\label{eq-analyticBS}
(\mathrm I -\beta  \chi_{\Omega  }R(z)\chi_{\Omega  })^{-1}= \frac{1}{\mathcal E_{\beta;n }-z } \tilde P_n (z)+\tilde B_n(z)\,:\,L^2 (\Omega )\to L^2 (\Omega )\,,
\end{equation}
where $\tilde B_n(z)$ stands for an analytic component and $\tilde P_n (z)$ is a projector acting in the space $L^2 (\Omega)$, cf.~\cite{GHN2015}.  Comparing the above with (\ref{eq-betaresolvent2}) and taking into account the resolvent formula (\ref{eq-betaresolvent1}),
we can write
\begin{equation} \label{eq-Pbeta}
 P_{\beta ;n }= \beta R(z) \chi_\Omega \tilde P_n(z) \chi_\Omega  R(z) \,.
 \end{equation}
Furthermore, using   the analyticity of the resolvent $R(z)$ for   $z\in\mathcal S_{ n }$, i.e.
%which implies the first resolvent identity
\begin{equation} \label{eq-R(z)}
R(z) = R(\mathcal E_{\beta ; n} ) + (\mathcal E_{\beta ; n}-z ) R(z) R(\mathcal E_{\beta ; n})\,
\end{equation}
we obtain from (\ref{eq-betaresolvent1}) that $  \beta R( \mathcal E_{\beta ; n} ) \chi_\Omega \tilde P_n(  \mathcal E_{\beta ; n}) \chi_\Omega  R( \mathcal E_{\beta ; n}) = P_{\beta ;n }+ \mathcal O_{\mathrm {op.u.}}(z-\mathcal E_{\beta ; n})$, where the error term is understood in the norm operator sense.
Implementing $\mathrm P_{\beta; n }= (w_{\beta; n}\,, \cdot )_{L^2 (\Omega )}w_{\beta; n}$ which, by definition of $\omega _{\beta ; n}$ and $w _{\beta , n}$ (see (\ref{eq-birmanschwingeromega})) satisfies
$P_{\beta ; n } =R (\mathcal E_{\beta ; n } ) \mathrm P_{\beta ; n }R (\mathcal E_{\beta ; n } )$, we state that $\tilde P_{n} (\mathcal E_{\beta ; n }) = \beta^{-1}\mathrm P_{\beta;n }$ and
eq.~(\ref{eq-analyticBS}) turns to
%Combining (\ref{eq-betaresolvent1}) and (\ref{eq-%betaresolvent2}) we get
\begin{equation}%\label{eq-birmanschwingerpoles}
(\mathrm I -\beta  \chi_{\Omega  }R(z)\chi_{\Omega  })^{-1} = \frac{1}{\beta } \frac{1}{\mathcal E_{\beta ; n } - z}\mathrm P_{\beta; n }+A_n(z)\,,
\end{equation}
%where $\mathrm P_{\beta; n }= (w_{\beta; n}\,, \cdot )_{L^2 (\Omega )}w_{\beta; n}$ which
%$P_{\beta ; n } =R (\mathcal E_{\beta ; n } ) \mathrm P_{\beta ; n }R (\mathcal E_{\beta ; n } )$
where  $A_n(z)$ acting  in $L^2 (\Omega )$ is  analytic in  $ \mathcal S_n$. This completes the proof.
%Define \begin{equation}
%    \rho:=  \mathrm{min } \{ |x_2 - y_2|\,:\, x=(x_1, %x_2)\in \Sigma \,, y=(x_1, y_2)\in \Omega \}\,.
%\end{equation}
\end{proof}

\subsection{Resolvent of $H_{\alpha, \beta }$ and preliminary facts about its spectral properties}

%We open our discussion  from the construction of the resolvent of $H_{\alpha , \beta }$  proceeding again in the spirit of the Birman--Schwinger idea.
Considering as a starting point the system governed by $H_{\alpha , 0}$ with corresponding resolvent $R_{\alpha, 0}(z)$, given by (\ref{eq-BSralpha}), we implement the pertubation determined by $-\beta\chi_\Omega\,$. To apply this strategy we assume that a complex number $z$ is such that the operator $\mathrm I -\beta  \chi_{\Omega  }R_{\alpha , 0 }(z)\chi_{\Omega  }\,:\, L^2 (\Omega )\to L^2 (\Omega ) $ is invertible.
After adding the potential $-\beta \chi_\Omega $ to the system governed by $H_{\alpha, 0}$ the resolvent takes the form
\begin{equation}\label{eq-resolvent-alpha-beta}
    R_{\alpha ,\beta  }(z)= R_{\alpha , 0} (z) +\beta  R_{\alpha, 0}(z) \chi_{\Omega  }(\mathrm I -\beta  \chi_{\Omega  }R_{\alpha ,0 }(z)\chi_{\Omega  })^{-1}\chi_{\Omega  } R_{\alpha, 0 }(z)\,,
\end{equation}
cf.~\cite{BEG, BEKS, HamsmanKrejcirik, Po1}.
\bigskip

\noindent \emph{Essential spectrum. }  The stability of the essential spectrum follows from Weyl's Theorem, cf.~\cite[Thm.~XIII.14]{RS4}.
Indeed, assuming the $z\in \varrho (H_{\alpha, 0})$ and employing compactness of $\Omega $ one can show that both   $R (z)\chi_\Omega \,:\,L^2 (\Omega )\to L^ 2(\mathbb R^2)$ and $\chi_\Sigma R (z)\chi_\Omega \,:\, L ^2 (\Omega )\to L^ 2(\Sigma )$ are compact; we postpone a detailed analysis to
the next section, see~Lemma~\ref{le-normsbounds}.
Applying now
 (\ref{eq-BSralpha})
we obtain the compactness  of $R_{\alpha , 0} (z)\chi_\Omega$.
This, in view of Weyl's Theorem, yields

\begin{equation}
\sigma_{\mathrm{ess}} (H_{\alpha, \beta  })= \sigma  _{\mathrm{ess}} (H_{\alpha, 0 }) = [E_\alpha \,,\infty ) \,.
\end{equation}

\begin{remark}
\rm{Although  the discrete spectrum of $H_{\alpha, \beta }$ does not belong to the main line of this paper, it is worth mentioning that the attractive potential $- \beta \chi_\Omega $ leads to at least one discrete eigenvalue below $E_\alpha $. One can establish this
by using the variational method and  constructing  a trial function with separate variables, where $\phi_\alpha $ represents the transverse direction and a positive propagating function is defined in the logitudinal direction. The details are presented in~\cite{Kondej2024}.
}
\end{remark} %\Hm{more details}

\section{Resonances induced by the tunnelling} \label{sec-resonances}

\subsection{Analytic extension of $\chi_\Omega R_{\alpha, 0 }(z)\chi_\Omega$}
\label{sec-analytic-extension}
%\noindent{...}
To analyze the resonances we will adopt   the strategy of finding     the poles of the resolvent~(\ref{eq-resolvent-alpha-beta}) as  zeros of  $\mathrm I -\beta  \chi_{\Omega  }R_{\alpha ,0 }(z)\chi_{\Omega  }$. Therefore, our first aim is to construct a second sheet continuation of  $\chi_\Omega R_{\alpha, 0 }(z)\chi_\Omega$. We start our discussion from the analysis of $\tau_{\alpha ; n}(z)$ (contributing  $\chi_\Omega R_{\alpha, 0 }^d(z) \chi_\Omega$) which has been defined so far for the first Riemann sheet, i.e. $\Im \tau_{\alpha ; n}(z) >0$.
%and contribute to the component $\chi_\Omega R_{\alpha, 0 }^d(z) \chi_\Omega$.
For each $n\in \mathcal N_\alpha $, the function $z\mapsto \tau_{\alpha; n} (z) $ admits a second sheet analytic continuation   through $[E_{\alpha ; n}, \infty)$ and the values of $\Im \tau_{\alpha ; n}(z) $
are negative in the lower complex half-plane. Consequently, %given $k\in \mathcal N_\alpha $
for each $k\in \mathcal N_\alpha $,
we can construct the lower half-plane analytic continuation  of
\begin{equation}\label{eq-Rda}
\chi_\Omega R_{\alpha, 0 }^d(z) \chi_\Omega= \chi_\Omega(x) \Big[\frac{i}{2}\sum_{n\in \mathcal N_\alpha }\frac{\e^{i\tau_{\alpha ; n} (z)
|x_1 -y_1|  }}{\tau_{\alpha ; n} (z) }
\phi_{\alpha ; n } (x_2) \phi_{\alpha ; n} (y_2)^\ast \Big] \chi_\Omega(y)\,,\
 \end{equation}
via the segments $(E_{\alpha; k }, E_{\alpha; k +1})$.
%G we construct the second sheet continuation of
More precisely, assuming that $z=\lambda +i\varepsilon$, where $\lambda \in ( E_{\alpha; k }, E_{\alpha; k +1})$ we move to the lower half-plane, i.e. varying with $\varepsilon$ from positive to negative values and defining  $\Im \,\tau_{\alpha;n }(z) <0$ if $n\leq k$ and $\Im \,\tau_{\alpha;n } (z) >0$ if $n>k$. We insert it into (\ref{eq-Rda}) and, in this way, we construct the second sheet component of $\chi _\Omega R^d_{\alpha , 0} (z)\chi_\Omega  \,:\,L^2 (\Omega )\to L^2 (\Omega )$.  On the other hand, the component $R^c _{\alpha , 0}(z)$ does not require any additional analysis, since it is analytic for $z=\lambda +i\varepsilon$, where $\lambda \in ( E_{\alpha; k }, E_{\alpha; k +1})$ and $\varepsilon$ belongs to a neighbourhood of $0$.
In view of (\ref{eq-Rdc}), this constitutes, for any $k\in \mathcal N_\alpha $, the second sheet continuation of $\chi_\Omega R_{\alpha, 0} (z)\chi_\Omega $    via $( E_{\alpha; k }, E_{\alpha; k +1})$ to a certain set $\mathcal C_{ k}\subset \mathbb C_-$. We remain  the same
notation for the second sheet operator valued function $\mathcal   C_{ k}\ni z \mapsto \chi_\Omega R _{\alpha, 0} (z) \chi_\Omega $. The above reasoning has also been adopted in~\cite{Kondej2024}.% For the sake of clarity, we have  presented it here in detail.

\medskip

%Applying the

\subsection{Zeros of Birman--Schwinger operator}

To find out the poles of   resolvent determined by zeros of $\mathrm I - \beta \chi_\Omega R_{\alpha, 0 } (z) \chi_\Omega\,:\, L^2 (\Omega ) \to L^2 (\Omega)$ we
 adopt (\ref{eq-BSralpha}) which allows to state  that   $z$ satisfies  $\ker [\mathrm I - \beta \chi_\Omega R_{\alpha, 0 } (z) \chi_\Omega ] \neq \emptyset$
 %we will adopt (\ref{eq-BSralpha})
 if and only if %which allows to write
%Let note that the operator whose kernel we are examining acts in $L^2 (\Omega )$.
%This leads to the condition
%\%Hm{sign}
\begin{equation} \label{eq-BSralpha-ext}
 \ker [ \mathrm I -\beta \chi_\Omega R (z)\chi_\Omega -G_{\alpha, \beta } (z) ]\neq \emptyset\,,
\end{equation}
where
\begin{equation}\label{eq-defG}
G_{\alpha, \beta } (z) = \beta \,\alpha \,\chi_\Omega R(z) \chi_{\Sigma }(\mathrm I -\alpha \chi_{\Sigma }R(z)\chi_{\Sigma })^{-1}\chi_{\Sigma } R(z) \chi_\Omega\,:\,L^2 (\Omega ) \to L^2 (\Omega )\,.
\end{equation}
 Our aim is to express the condition   (\ref{eq-BSralpha-ext}) in the terms of a scalar equation. Note that  $\beta \chi_\Omega R(z)\chi_\Omega$ determines the Birman--Schwinger operator for $H_{0, \beta }$. For $z$ living in a certain neighbourhood of the discrete eigenvalue $\mathcal E_{\beta; n}$ but $z\neq \mathcal E_{\beta; n}$ the operator $\mathrm I- \beta \chi_\Omega R(z)\chi_\Omega$ is invertible and its inverse has the representation (\ref{eq-birmanschwingerpoles}).
%
 %This means, in fact, that  for $z\notin \rho (H_{0,\beta })$ and living in a certain neighbourhood $\mathcal S_n$ of $\mathcal E_{\beta; n}$ the operator $\mathrm I- \beta \chi_\Omega R(z)\chi_\Omega$ admits the inverse which can be written \Hm{discussion corrections needed}
%\begin{equation} \label{eq-BSinversedecomp}
%(\mathrm I- \beta \chi_\Omega R (z)\chi_\Omega)^{-1} = {\color{red}-}\frac{1}{z-\mathcal E_{\beta ;n}} P_n +A(z)\,,\quad \mathrm{ where}\quad n\in \mathcal N_\beta ,
%\end{equation}
%and $P_n = (\chi_\Omega \omega_{n}, \cdot ) \chi_\Omega \omega_{n} $ and $A(z)$ is  operator-valued family analytic in $\mathcal S_n$. By means of the above we can write
Therefore one gets
%This allows to write
\begin{equation}\label{eq-decompkernel}
\mathrm I -\beta \chi_\Omega R (z)\chi_\Omega  - G_{\alpha, \beta } (z)= (\mathrm I- \beta \chi_\Omega R (z)\chi_\Omega)\Big[ \mathrm I -(\mathrm I- \beta \chi_\Omega R (z)\chi_\Omega)^{-1} G_{\alpha, \beta } (z) \Big]\,.
\end{equation}
 %and applying (\ref{eq-BSinversedecomp}) we get
To proceed further we estimate  the norms involved in the above formula.% of $G_{\alpha, \beta }(z)$. %{\color{purple}
\begin{lemma} \label{le-normsbounds}The operator $\chi_\Omega R (z) \chi_\Sigma \,:\, L^2 (\Sigma ) \to L^2 (\Omega )$ has the Hilbert--Schmidt norm
\begin{equation}\label{eq-asymR0}
\| \chi_\Omega R (z) \chi_\Sigma \|^2_{\mathrm{HS}} \leq C\frac{|\Omega |\ }{|z|^{1/2}\mathcal T_z}\, \frac{\e ^{-\sqrt{2}\mathcal T_z\rho } }{\rho}\,,
\end{equation}
where $\mathcal T_z:= \Im \sqrt{z} >0 $, $\rho$ is defined by (\ref{eq-defrho}) and $C$ is a positive constant.
\end{lemma}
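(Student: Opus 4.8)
The plan is to compute the Hilbert--Schmidt norm directly from the integral kernel of $R(z)$ and then extract the exponential decay from the large-argument behaviour of the Macdonald function. Recall from (\ref{eq-Macdonald}) that $R(z)$ acts as an integral operator with kernel $\frac{1}{2\pi}K_0(k_z|x-y|)$, where $k_z=-i\sqrt z$, so that $\Re k_z=\Im\sqrt z=\mathcal T_z>0$ and $|k_z|=|z|^{1/2}$. Since the Hilbert--Schmidt norm of an integral operator is the $L^2$-norm of its kernel, I would begin by writing
\begin{equation}
\|\chi_\Omega R(z)\chi_\Sigma\|^2_{\mathrm{HS}}=\frac{1}{4\pi^2}\int_\Omega\int_\Sigma |K_0(k_z|x-y|)|^2\,\mathrm dy\,\mathrm dx\,,
\end{equation}
so that everything reduces to estimating this double integral. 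Because $\Omega$ and $\Sigma$ are separated by the distance $\rho$ of (\ref{eq-defrho}), every argument satisfies $|k_z||x-y|\geq|z|^{1/2}\rho$, which is large for $\rho$ large; hence only the asymptotic regime of $K_0$ is relevant and the logarithmic singularity at the origin never enters.

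Next I use the standard uniform bound $|K_0(w)|\leq C|w|^{-1/2}\e^{-\Re w}$, valid for $|w|$ bounded below, together with $\Re(k_z|x-y|)=\mathcal T_z|x-y|$ and $|k_z|=|z|^{1/2}$, to obtain $|K_0(k_z|x-y|)|^2\leq C|z|^{-1/2}|x-y|^{-1}\e^{-2\mathcal T_z|x-y|}$. The decisive step is then to decouple the longitudinal and transverse directions: writing $x-y=(x_1-y_1,x_2-y_2)$ and invoking the elementary inequality $\sqrt{u^2+v^2}\geq\frac{1}{\sqrt2}(|u|+|v|)$, I bound $\e^{-2\mathcal T_z|x-y|}\leq\e^{-\sqrt2\,\mathcal T_z|x_1-y_1|}\e^{-\sqrt2\,\mathcal T_z|x_2-y_2|}$, and I bound the remaining prefactor crudely by $|x-y|^{-1}\leq\rho^{-1}$. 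The $y_1$-integral over $\R$ of $\e^{-\sqrt2\,\mathcal T_z|x_1-y_1|}$ contributes a factor $\sqrt2/\mathcal T_z$, while the $y_2$-integral over $[0,d]$ is controlled using the orientation remark $\rho=\min|x_2-y_2|$, which guarantees $|x_2-y_2|\geq\rho$ for all $x\in\Omega$, $y\in\Sigma$, and hence contributes $d\,\e^{-\sqrt2\,\mathcal T_z\rho}$. Integrating the resulting $x$-independent bound over $\Omega$ produces the volume factor $|\Omega|$, and collecting constants yields exactly (\ref{eq-asymR0}).

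The main point requiring care is the geometric inequality used to factorise the exponential: a naive estimate $\e^{-2\mathcal T_z|x-y|}\leq\e^{-2\mathcal T_z\rho}$ would give a sharper rate but would not survive integration over the infinite longitudinal extent of $\Sigma$, whereas the split $\sqrt{u^2+v^2}\geq(|u|+|v|)/\sqrt2$ trades the optimal constant $2$ in the exponent for $\sqrt2$ precisely so as to keep the $y_1$-integral convergent and to generate the $\mathcal T_z^{-1}$ factor; this is the origin of the $\sqrt2$ appearing in (\ref{eq-asymR0}). I would also record that the constant $C$ may be taken uniform for $z$ ranging over any fixed compact subset of the relevant second-sheet region, since there $|z|^{1/2}$ stays bounded away from $0$ and $\infty$, which is exactly the uniformity needed by the compactness and analyticity arguments invoked earlier. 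Finally, the same kernel computation—now over the full space near the diagonal, where $|K_0|^2$ is locally integrable in two dimensions—shows that $R(z)\chi_\Omega$ and $\chi_\Sigma R(z)\chi_\Omega$ are Hilbert--Schmidt, hence compact, as claimed preceding the lemma.
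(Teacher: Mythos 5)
Your proposal is correct and follows essentially the same route as the paper: compute the Hilbert--Schmidt norm as the $L^2$-norm of the kernel $\frac{1}{2\pi}K_0(k_z|x-y|)$, invoke the large-argument bound $|K_0(k_z\xi)|^2\leq C|z|^{-1/2}\xi^{-1}\e^{-2\mathcal T_z\xi}$ (justified because $|x-y|\geq\rho$ on $\Omega\times\Sigma$), and use $\sqrt{2}(a^2+b^2)^{1/2}\geq|a|+|b|$ to split the exponential so that the longitudinal integral converges and produces the $\mathcal T_z^{-1}$ factor while the transverse separation yields $\e^{-\sqrt2\mathcal T_z\rho}/\rho$. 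The only cosmetic difference is that the paper first reduces to the one-dimensional integral $\int_{\mathbb R}(x_1^2+\rho^2)^{-1/2}\e^{-2\mathcal T_z(x_1^2+\rho^2)^{1/2}}\,\mathrm dx_1$ before applying the $\sqrt2$-inequality, whereas you factorise the exponential first and bound the prefactor by $\rho^{-1}$ directly; the two orderings are equivalent.
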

\begin{proof} To prove the claim we will employ  the representation (\ref{eq-Macdonald}) which implies
\begin{equation} \label{eq-hilb}
\| \chi_\Omega R (z) \chi_\Sigma \|^2_{\mathrm{HS}} =\frac{1}{(2\pi )^2} \int_{\Omega } \int_{\Sigma }\big|K_0 (k_z|x-y|)\big|^2\,\mathrm{d}x\,\mathrm{d}y\,,
\end{equation}
where $k_z= -i\sqrt{z}$;  let us remind that $x=(x_1, x_2)$ and the same notation is adopted for  $y$.
%and
%using the  monotonicity  of the function $|K_0 (\xi |\cdot | )|$,
Developing the above norm, we obtain
%\begin{equation}\label{eq-Kestim}
\begin{align}
\begin{split} \label{eq-hilb2}
\| \chi_\Omega R (z) \chi_\Sigma \|^2_{\mathrm{HS}}& =\frac{1}{(2\pi )^2} \int_{\Omega }\int_{\Sigma }\big|K_0 (k_z(\, (x_1-y_1)^2+(x_2-y_2)^2)^{1/2}\big|^2\,\mathrm{d}x\,\mathrm{d}y \\
 &=
\frac{d}{(2\pi )^2} \int_{\Omega }\int_{\mathbb R } \big|K_0 (k_z(\, (x_1 -y_1)^2+y_2^2)^{1/2}\big|^2 \, \mathrm{d}x_1\,\mathrm{d}y\,.
\end{split}
\end{align}
%\end{equation}
Using the asymptotics  of the Macdonald function for the large argument, \cite[ p.~378,~9.7.2]{AS},
%\begin{equation}\label{eq-K0}
%\frac{2 \xi |z|^{1/2} }{\pi } \e^{2  \mathcal T _z\xi  }\,|K_0 (k_z \,\xi )|^2 =\Big(1+ {\color{blue} \mathcal O \Big( \frac{1}{\xi} \Big) } \Big) \quad \mathrm{for}\,\,\, \xi %\to +\infty \,,
%\end{equation}
  we obtain
\begin{equation}\label{eq-K00}
|K_0 (k_z \,\xi )|^2
\leq C '\frac{1  }{\xi |z|^{1/2} } \mathrm e^{-2  \mathcal T _z\xi  }\,,
\end{equation} for $\xi >\xi_0 $, where $C'$ and $\xi_0$ are  positive constants and $C'$ depends on $\xi_0$. Combining  the above inequality
and (\ref{eq-hilb2}) and employing
(\ref{eq-defrho})
together with the monotonicity   of the  function $\mathbb R_+ \ni x \mapsto x^{-1} \e^{-cx }$, $c>0$ we can continue with estimates for the Hilbert--Schmidt norm: %(\ref{eq-hilb}) taking into account
\begin{align} \label{eq-HS1}
\begin{split}
\| \chi_\Omega R (z) \chi_\Sigma \|^2_{\mathrm{HS}} \leq &
C' \frac{d}{(2\pi )^2}  \cdot \frac{1 }{|z|^{1/2}} \int_{\Omega } \int_{\mathbb R}\frac{  \mathrm e^{-2\mathcal T_z (x_1^2+ \rho ^2)^{1/2}}}{   (x_1^2+ \rho ^2)^{1/2}}\,\mathrm{d}x_1\,\mathrm{d}y   \\ &= C'
\frac{d\,|\Omega |}{(2\pi )^2}  \cdot \frac{1 }{|z|^{1/2}} \int_{\mathbb R}\frac{  \mathrm e^{-2\mathcal T_z (x_1^2+ \rho ^2)^{1/2}}}{   (x_1^2+ \rho ^2)^{1/2}}\,\mathrm{d}x_1\,.
\end{split}
\end{align}
%\begin{align}
%\| \chi_\Omega R (z) \chi_\Sigma \|^2_{\mathrm{HS}} \leq &
%\frac{d}{(2\pi )^2} \int_{\Omega } \int_{\mathbb R}|K_0 (k_z %(x_1^2+\rho^2)^{1/2})|^2 \,\mathrm{d}x_1\,\mathrm{d}y   \\ &=
%\frac{d|\Omega |}{(2\pi )^2}  \int_{\mathbb R}|K_0 (k_z %(x_1^2+\rho^2)^{1/2})|^2 \,\mathrm{d}x_1\,.
%\end{align}
Applying again the monotonicity of the  integrand function %$\mathbb R_+ \ni \rho \mapsto \rho^{-1} \e^{-c\varrho }$, $c>0$
and
the inequality
$$
\sqrt{2}(a^2+b^2)^{1/2 }\geq |a|+|b|\,,\quad \quad a\,,b\in \mathbb R\,,
$$
we arrive at
\begin{align}\label{eq-0}
\begin{split}
\| \chi_\Omega R (z) \chi_\Sigma \|^2_{\mathrm{HS}} \leq &  C'
\frac{\sqrt{2}d\,|\Omega |}{(2\pi )^2}  \cdot \frac{1 }{|z|^{1/2}}
\int_{\mathbb R}\frac{  \mathrm e^{-\sqrt 2\mathcal T_z (|x_1|+ \rho)}} {   |x_1|+ \rho }\,\mathrm{d}x_1
\\ &
\leq C\frac{|\Omega  |}{|z|^{1/2} \mathcal T_z}   \frac{\e^{-\sqrt 2 \mathcal T_z \rho }}{\rho } \,,
\end{split}
\end{align}
where $C= C'
\frac{d }{2\pi ^2} $.
This shows (\ref{eq-asymR0}).
\end{proof}

In fact the above reasoning shows that
the operator $\chi_\Sigma R (z) \chi_\Omega  \,:\, L^2 (\Omega  ) \to L^2 (\Sigma  )$ admits the same  Hilbert--Schmidt norm bound given by (\ref{le-normsbounds}).
%as (\ref{eq-asymR0}).
Proceeding in the same way as in (\ref{eq-HS1})  one can show that $ \chi_{\Sigma }R(z)\chi_{\Sigma }$ is Hilbert--Schmidt.  Using the Fredholm theorem, cf.~\cite{RS1}, we can conclude that $(\mathrm I - \alpha \chi_{\Sigma }R(z)\chi_{\Sigma })^{-1}$ exists and it is bounded for $z\in [E_\alpha \,,\infty ) $ apart from a set of  isolated points.
Furthermore, applying (\ref{eq-defG}) and the boundedness of $(\mathrm I -\alpha \chi_{\Sigma }R(z)\chi_{\Sigma })^{-1}$ we obtain
the following  bound for operator $G_{\alpha, \beta } (z)\,:\,L^2 (\Omega )\to L^2 (\Omega )$,
    \begin{equation}\label{eq-asymG}
\| G_{\alpha, \beta } (z)\|_{\mathrm{HS}} \leq  C_1
%\frac{|\Omega |\ }{|z|^{1/2}{\color {blue}\mathcal T_z}}\,
\frac{ \e ^{-\sqrt{2}\mathcal T_z \rho } }{\rho}\,,
  \end{equation}
where $C_1$ is a positive constant which depends on $z$.

%}
%\begin{remark} \rm{
%    Note that the integral in (\ref{eq-K0est})
%    can be estimated as
%\begin{align}
%    \%int_{\mathbb R} \frac{\e^{-\sqrt 2 \mathcal %T_z |x_1| }}{|x_1|+\rho } \,\mathrm d x_1= 2 \e ^{\sqrt 2 \mathcal T_x \rho } \int_{\sqrt{2}\mathcal T_z \rho }^\infty \frac{e^{-t}}{t} \mathrm d t \leq 2  \ln \Big(1+\frac{1}{\sqrt 2 \mathcal T_z \rho }\Big)\,,
%\end{align}
%where we used $\int_{b }^\infty \frac{e^{-t}}{t} \leq \e^{-b} (1+1/(b))$ for $b>0$.
%}
%\end{remark}

\begin{remark} Analytic continuation of the resolvent $R_{\alpha , \beta }(z)$. \rm{In Section~\ref{sec-analytic-extension} we constructed analytic continuation of $\chi_\Omega R_{\alpha, 0}(z)\chi_\Omega$ via the  segments $(E_{\alpha;k}, E_{\alpha;k+1})$. This means that $\mathrm I -\beta \chi_\Omega R_{\alpha, 0}(z)\chi_\Omega$ is analytic in $\mathcal C_k$. Moreover, applying the estimates (\ref{eq-asymR0}) and (\ref{eq-asymG}), we conclude that $\chi_\Omega R_{\alpha, 0} (z)\chi_\Omega$ is a Hilbert--Schmidt operator. Using again the Fredholm theorem, cf.~\cite{RS1}, we state that  $\mathrm I -\beta  \chi_\Omega R_{\alpha, 0}(z)\chi_\Omega$ is invertible in $\mathcal C_k$ apart from isolated  points. For any $g, f\in C_0 (\R^2)$ we can also construct the continuation of $gR_{\alpha , 0 }(z)f$ to $\mathcal C_k$ applying the reasoning from Section~\ref{sec-analytic-extension}. This leads to continuation
of
\begin{equation}\label{eq-resolvent-alpha-beta-IIsheet}
  f  R_{\alpha ,\beta  }(z)g= fR_{\alpha , 0} (z) g+\beta  fR_{\alpha, 0}(z) \chi_{\Omega  }(\mathrm I -\beta  \chi_{\Omega  }R_{\alpha ,0 }(z)\chi_{\Omega  })^{-1}\chi_{\Omega  } R_{\alpha, 0 }(z)g\,,
\end{equation}
via the segment $(E_{\alpha;k}, E_{\alpha;k+1})$ to $\mathcal C_k$ apart from isolated points.
}
\end{remark}
\bigskip

Let us back to the analysis of (\ref{eq-decompkernel}).
Combining the  decomposition (\ref{eq-birmanschwingerpoles}) and the fact that the operator valued function  $ z \mapsto A_n(z)$ is analytic a certain neighbourhood $\mathcal S_n$ of $\mathcal E_{\beta ;n }$  and taking into account  the   norm estimate
%of $G_{\alpha, \beta } (z)$ behaves as
(\ref{eq-asymG}) we state that $\mathrm I -A_n (z)G_{\alpha, \beta }(z)$ is invertible for $\rho $ large enough. Consequently,  the operator contributing to (\ref{eq-decompkernel}) can be written
\begin{align} \label{eq-kercon1}
   \mathrm I - (\mathrm I- \beta \chi_\Omega R (z)\chi_\Omega)^{-1} G_{\alpha, \beta } (z)  = \Big[\mathrm I - A_n (z)G_{\alpha, \beta }(z)\Big]\,
  \Big[ \mathrm I + \frac{1}{\beta }\frac{1}{z-\mathcal E_{\beta ; n }}\Big[
  \mathrm I -A_n(z)G_{\alpha, \beta } (z)
  \Big]^{-1} \mathrm P_{\beta; n} G_{\alpha, \beta }(z)\Big]\,.
\end{align}
Applying   (\ref{eq-kercon1}) and (\ref{eq-decompkernel}) we come to the conclusion that the condition (\ref{eq-BSralpha-ext})
is equivalent to
\begin{align} \label{eq-kernelexp2}
    \ker \Big[ z-\mathcal E_{\beta ; n } + \frac{1}{\beta }\Big[\mathrm I -A_n(z)G_{\alpha, \beta } (z)\Big]^{-1}\mathrm  P_{\beta ;n} G_{\alpha, \beta }(z)\Big] \neq
    \emptyset \,.
\end{align}
Note that the operator $L_n (z ):= [\mathrm I -A_n (z)G_{\alpha, \beta } (z)]^{-1} \mathrm P_{\beta ;n} G_{\alpha, \beta }(z)$ acting in $L^2 (\Omega )$ is  rank one and it acts as
\begin{equation}\label{eq-Ln}
L_n (z)f= (w_{\beta ;n }\,, G_{\alpha, \beta }(z) f)_{L^2 (\Omega )}\big[\mathrm I -A_n (z)G_{\alpha, \beta } (z)\big]^{-1} w_{\beta ;n }\,.
\end{equation}
Therefore, the condition (\ref{eq-kernelexp2}) can be transited to the scalar equation
\begin{align} \label{eq-kernelexp3}
    z-\mathcal E_{\beta ; n } +  \frac{1}{\beta }
     (w_{\beta ;n }\,, G_{\alpha, \beta }(z) \big[\,\mathrm I -A_n(z)G_{\alpha, \beta } (z)\,\big]^{-1} w_{\beta ;n })_{L^2 (\Omega )}
=0\,.
\end{align}
%Henceforth, we denote by $\mathcal S_n$ a certain, small neighbourhood of $\mathcal E_{\beta; n}$.
With the above results, we are ready to prove the following statement, which establishes the resonances.
\begin{theorem} \label{th-resonances} Assume that $H_{0, \beta }$ has a simple, discrete eigenvalue  $\mathcal E_{\beta ; n} \in (E_{\alpha; j}\,, E_{\alpha; j+1})$.
Then the second sheet analytic continuation of $R_{\alpha, \beta }(z)$ defined by~(\ref{eq-resolvent-alpha-beta-IIsheet})
has a unique pole $z_n\in \mathcal S_n$,  i.e.  $\ker [\mathrm I - \beta \chi_\Omega R_{\alpha , 0}(z_n) \chi_\Omega ] \neq \emptyset $, with the asymptotics
%\Hm{reformulate !!}
\begin{equation} \label{ieq-asympI}
   z_{n}(\rho )= \mathcal E_{\beta ; n}+\mathcal O \Big(\frac{\e^{-\sqrt{2 |\mathcal E_{\beta ;n}|} \rho}  }{\rho }\Big)\,.
\end{equation}
%The resolvent $R_{\alpha, \beta }(z)$ admits the second  sheet analytic continuation in the sense of (\ref{eq-resolvent-alpha-beta-IIsheet}) via $(E_{\alpha; j}\,, E_{\alpha; j+1})$. There exists a unique $z_n\in \mathcal S_n$ for which $R_{\alpha, \beta }(z)$ has a pole, i.e.  $\ker [\mathrm I - \beta \chi_\Omega R_{\alpha , 0}(z_n) \chi_\Omega ] \neq \emptyset $. Moreover, $z_n $ has asymptotics
\end{theorem}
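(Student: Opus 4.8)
The plan is to reduce the pole condition to the scalar equation (\ref{eq-kernelexp3}), which we have already shown is equivalent to $\ker[\mathrm I -\beta \chi_\Omega R_{\alpha, 0}(z)\chi_\Omega]\neq\emptyset$, and then to solve it by a small-perturbation argument. Setting
\[
F(z):= z-\mathcal E_{\beta ; n}+\eta(z)\,,\qquad
\eta(z):=\frac{1}{\beta}\big(w_{\beta ;n}\,,\,G_{\alpha,\beta}(z)\,[\mathrm I -A_n(z)G_{\alpha,\beta}(z)]^{-1}w_{\beta ;n}\big)_{L^2(\Omega)}\,,
\]
locating the resonance pole $z_n$ amounts to finding the unique zero of $F$ in $\mathcal S_n$. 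The guiding idea is that $\eta$ is exponentially small in $\rho$, so $F$ is an analytic perturbation of the linear map $z\mapsto z-\mathcal E_{\beta;n}$, whose only zero in $\mathcal S_n$ is $\mathcal E_{\beta;n}$.

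First I would record analyticity and smallness of $\eta$ on $\mathcal S_n$. Analyticity follows from the second-sheet continuation of $G_{\alpha,\beta}(z)$ built in Section~\ref{sec-analytic-extension}, the analyticity of $A_n(z)$ from the decomposition (\ref{eq-birmanschwingerpoles}), and the invertibility of $\mathrm I -A_n(z)G_{\alpha,\beta}(z)$ for $\rho$ large (a Neumann series, guaranteed by (\ref{eq-asymG})). For the smallness, Cauchy--Schwarz together with the boundedness of $[\mathrm I -A_n(z)G_{\alpha,\beta}(z)]^{-1}$ and the fixed norm $\|w_{\beta;n}\|_{L^2(\Omega)}$ give
\[
|\eta(z)|\le C\,\|G_{\alpha,\beta}(z)\|_{\mathrm{HS}}\le C'\,\frac{\e^{-\sqrt 2\,\mathcal T_z\,\rho}}{\rho}\,,\qquad \mathcal T_z=\Im\sqrt z\,,
\]
by (\ref{eq-asymG}). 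Since $\mathcal E_{\beta;n}<0$ lies strictly below the free-Laplacian threshold, the free resolvent kernel is analytic across the continuation segment $(E_{\alpha;j},E_{\alpha;j+1})$, so $\mathcal T_z$ is continuous on $\mathcal S_n$ with $\mathcal T_{\mathcal E_{\beta;n}}=\sqrt{|\mathcal E_{\beta;n}|}$; after possibly shrinking $\mathcal S_n$ (to keep $C'$ bounded and to avoid the isolated singularities of $(\mathrm I -\alpha\chi_\Sigma R(z)\chi_\Sigma)^{-1}$) we may assume $\mathcal T_z\ge \sqrt{|\mathcal E_{\beta;n}|}-\delta$ there, so $\eta$ is uniformly exponentially small on $\mathcal S_n$.

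With these in hand I would invoke Rouch\'e's theorem. Taking $\mathcal S_n$ to be a disk of fixed small radius $r_0$ about $\mathcal E_{\beta;n}$, on $\partial\mathcal S_n$ we have $|z-\mathcal E_{\beta;n}|=r_0$ while $|\eta(z)|\to 0$ as $\rho\to\infty$; hence $|\eta(z)|<|z-\mathcal E_{\beta;n}|$ on $\partial\mathcal S_n$ for $\rho$ large. Rouch\'e then gives that $F$ and $z\mapsto z-\mathcal E_{\beta;n}$ have the same number of zeros (counted with multiplicity) inside $\mathcal S_n$, namely exactly one, which is the sought unique pole $z_n=z_n(\rho)$. (An equivalent route is to verify that $z\mapsto \mathcal E_{\beta;n}-\eta(z)$ is a contraction of $\mathcal S_n$ and apply the Banach fixed-point theorem.)

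The asymptotics is read off from $z_n-\mathcal E_{\beta;n}=-\eta(z_n)$, giving $|z_n-\mathcal E_{\beta;n}|=|\eta(z_n)|\le C'\,\e^{-\sqrt 2\,\mathcal T_{z_n}\rho}/\rho$. The only delicate point is sharpening the exponent $\mathcal T_{z_n}$ to the claimed $\sqrt{|\mathcal E_{\beta;n}|}$, which I would handle by a bootstrap: the crude bound already forces $z_n\to\mathcal E_{\beta;n}$ exponentially fast, whence $\mathcal T_{z_n}-\sqrt{|\mathcal E_{\beta;n}|}=\mathcal O(|z_n-\mathcal E_{\beta;n}|)$ and therefore $\rho\,(\mathcal T_{z_n}-\sqrt{|\mathcal E_{\beta;n}|})\to 0$. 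Reinserting this into $\e^{-\sqrt 2\,\mathcal T_{z_n}\rho}=\e^{-\sqrt{2|\mathcal E_{\beta;n}|}\,\rho}(1+o(1))$ yields (\ref{ieq-asympI}). I expect this self-consistent treatment of the $z$-dependence of $\mathcal T_z$, rather than the complex-analytic counting which is routine here, to require the most care.
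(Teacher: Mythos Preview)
Your argument is correct and reaches the same conclusion, but the route differs from the paper's in a meaningful way. The paper introduces the auxiliary parameter $b=1/\rho$, extends $\eta$ continuously to $b=0$ by $\eta(z,0)=0$, and then invokes the \emph{implicit function theorem}: it checks via the resolvent identity (\ref{eq-expansionG}) that $\partial_z\eta(z,b)|_{b=0}=0$, so $\partial_z\tilde\eta(z,0)=1$, yielding a unique branch $z_n(b)$ near $\mathcal E_{\beta;n}$. The asymptotics is then extracted by expanding the Neumann series (\ref{eq-kernelexp6}) and isolating the leading term $-\beta^{-1}(w_{\beta;n},G_{\alpha,\beta}(\mathcal E_{\beta;n})w_{\beta;n})$.

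Your Rouch\'e/contraction argument is more elementary: you never introduce the parameter $b$, you do not need to differentiate $\eta$, and you avoid the resolvent-identity computation (\ref{eq-expansionG}). The price is the bootstrap step for the exponent, which the paper sidesteps because its implicit-function framework already places $z_n$ at $\mathcal E_{\beta;n}$ to leading order and then reads off the size of $\eta$ at $z=\mathcal E_{\beta;n}$ directly. Both approaches are sound; yours is shorter for existence and uniqueness, while the paper's yields the asymptotics with slightly less work and is better positioned for the subsequent analysis of $\Im z_n$ in Section~\ref{sec-Fermi}, where the explicit leading term $(w_{\beta;n},G_{\alpha,\beta}(\mathcal E_{\beta;n})w_{\beta;n})$ is needed.
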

\begin{proof} \noindent \emph{Step 1. Auxiliary statement:  the Neumann expansion.}  We adopt  the notation $b  := \frac{1}{\rho }$, which means that $b\to 0 $ for $\rho \to \infty $. Note that the operator $G_{\alpha , \beta }(z)$ depends on $b$ because it is defined by the embeddings  acting between spaces $L^2 (\Omega )$ and $L^2 (\Sigma )$, see~(\ref{eq-defG}). Since the parameter $b$ will play the essential rule in the further discussion we implement the notation
\begin{align} \label{eq-kernelexp5}
   \eta (z, b):=
     (w_{\beta ;n }\,, G_{\alpha, \beta }(z) [\,\mathrm I -A_n(z)G_{\alpha, \beta } (z)\,]^{-1} w_{\beta ;n })_{L^2 (\Omega )}\,.
\end{align} %\Hm{Introduce $\mathcal S_n$}
In view of (\ref{eq-asymG}) we have  $\lim_{b\to 0 }\eta (z,b) =0$ for any $z\in \mathcal S_n$. We continuously extend function $\eta $ assuming that $\eta (z, 0)=0$. Note that formula~(\ref{eq-kernelexp5}) can be rephrased in the terms of the  Neumann series
\begin{equation}\label{eq-Neuman}
[
  \mathrm I -A_n(z)G_{\alpha, \beta } (z)
  ]^{-1} =
  \mathrm I + A_n(z)G_{\alpha, \beta } (z) +[\,A_n(z)G_{\alpha, \beta } (z) \,]^2+...
\end{equation}
which implies %Applying this  to (\ref{eq-kernelexp3}) we get
\begin{align} \label{eq-kernelexp6}
\begin{split}
   \eta (z, b)=
     \Big[  & \,(w_{\beta ;n} \,, G_{\alpha, \beta }(z)  w_{\beta ;n})_{L^2 (\Omega )} +
(w_{\beta ;n}\, , G_{\alpha, \beta }(z)  A_n (z)G_{\alpha, \beta } (z) w_{\beta ;n})_{L^2 (\Omega )}
    \\ +&
      (w_{\beta ;n}\, , G_{\alpha, \beta }(z)  [A_n (z)G_{\alpha, \beta } (z) ]^2\, w_{\beta ;n}\,)_{L^2 (\Omega )}+...\Big]
=0\,.
\end{split}
\end{align}

\noindent \emph{Step 2. Unique solution of the spectral equation.} To discuss further properties of $\eta $ we will investigate  analycity of $G_{\alpha, \beta }(z)$, which is related to the resolvent $R_{\alpha, 0}(z) $ by the equivalence  $G_{\alpha, \beta }(z)= \beta \, [\,\chi_{\Omega }R_{\alpha, 0 }(z)\chi_{\Omega }- \chi_{\Omega }R(z)\chi_{\Omega }\,]$, cf.~(\ref{eq-defG}) and (\ref{eq-resolvent-alpha-beta}).
As a consequence of the first resolvent identity
\begin{equation}\label{eq-residen}
\chi_{\Omega } [\,R(z)-R(z')\,]\chi_{\Omega } = (z'-z)\chi_{\Omega } R(z)R(z')\chi_{\Omega }\,,
\end{equation}
and the analogous identity  for $R_{\alpha, 0} (z)$, we get
\begin{align} \label{eq-expansionG}
\begin{split}
G_{\alpha, \beta }(z) &- G_{\alpha, \beta }(z') = %(z'-z)
\beta
 \,\chi_{\Omega }\big [\,R_{\alpha, 0 }(z)-
R_{\alpha, 0 }(z')-
 R(z) +R(z')\,\big]\chi_{\Omega }\\
& = (z'-z)
\beta
 \,\big [ \chi_{\Omega }\,R_{\alpha, 0 }(z)R_{\alpha, 0 }(z')\chi_{\Omega } - \chi_{\Omega }\,R(z)R (z')\chi_{\Omega } \,\big]
\\ &=(z'-z) \big[\chi_{\Omega }R(z)\chi_{\Omega }G_{\alpha, \beta } (z') + G_{\alpha, \beta } (z) \chi_{\Omega }R(z') \chi_{\Omega }+\frac{1}{\beta } G_{\alpha, \beta } (z) G_{\alpha, \beta } (z')  \big] \,.
\end{split}
\end{align}
%in the above reasoning we again used
%$\chi_{\Omega }R_{\alpha, 0 }(z)\chi_{\Omega }= \beta^{-1} \, %\,G_{\alpha, \beta }(z)+ \chi_{\Omega }R(z)\chi_{\Omega }$.
Note that the operators in the last line of (\ref{eq-expansionG}) can be bounded by means of~(\ref{eq-asymG}). In view of the  analyticity of $A_n (z)$, the same  reasoning can be proceeded for operators appearing  in further terms of (\ref{eq-kernelexp6}).
Therefore we can state that $\eta (z,b) $ is analytic as the function of $z\in \mathcal S_n$, and
\begin{equation}
    \eta (z, b)- \eta (z', b)=(z-z') f(z,z';b)\,,
\end{equation}
    where $f(\cdot ,\cdot\, ; b  )$ is analytic and   $\lim _{b\to 0 }f(z, z'; b)=0$ uniformly with  respect to $z$ and $z'$. This gives $\partial_z \eta (z;b)|_{b= 0}=0$, where we understand the symbol $|_{b= 0}$ as the limiting value for $b\to 0$. The latter follows directly from the estimate (\ref{eq-asymG}) and eq.~(\ref{eq-expansionG}). This means that
\begin{equation} \label{eq-deriv=1}
\frac{\partial \tilde \eta (z,b)}{\partial z }\Big |_{b=0}    =1\,,\quad
\mathrm{where}\quad
\tilde \eta (z,b)
 :=z-\mathcal E _{\beta; n } +\beta ^{-1}\eta (z,b )
\,.
\end{equation}
Note that the spectral equation (\ref{eq-kernelexp3}) reads
\begin{equation} \label{eq-spectralreson}
 \tilde \eta (z,b ) = 0\,.
\end{equation}
Combining (\ref{eq-deriv=1})  with the implicit  function theorem and the fact that $ \tilde \eta (\mathcal E_{\beta ; n  }, 0 )=0 $, we can conclude  that for $b$ small enough the equation  (\ref{eq-spectralreson}) has a unique solution $z_n (b)$ such that
$\tilde \eta (z_n (b), b ) = 0 $. The asymptotics of the solution can be recovered from (\ref{eq-kernelexp6}). Since we are interested in a solutions for $b\neq 0$ we again back  to the variable $\rho $ and the asymptotics  $\rho\to \infty$. Without a danger of confusion, we keep the same notation for $z_n (\cdot ) $ expressed in the terms of $\rho$.

\noindent \emph{Step 3. Asymptotics of the solution.} Combining the spectral equation (\ref{eq-spectralreson}) and the expansion (\ref{eq-kernelexp6}) together with the definition of $\tilde \eta (\cdot, \cdot)$ from~(\ref{eq-deriv=1}), we get
\begin{align} \begin{split}\label{eq-expansion1}
z_n (\rho)= \mathcal E_{\beta ;n } - \frac{1}{\beta } \Big[  & \,(w_{\beta ;n} \,, G_{\alpha, \beta }(z)  w_{\beta ;n})_{L^2 (\Omega )} +
(w_{\beta ;n}\, , G_{\alpha, \beta }(z)  A_n (z)G_{\alpha, \beta } (z) w_{\beta ;n})_{L^2 (\Omega )}
    \\ + &
      (w_{\beta ;n}\, , G_{\alpha, \beta }(z)  [A_n (z)G_{\alpha, \beta } (z) ]^2\, w_{\beta ;n}\,)_{L^2 (\Omega )}+...\Big]\,.
      \end{split}
\end{align}
As follows from the above discussion, the dominated term of  $z_n (\rho)$  comes from the term $-\beta ^{-1}(w_{\beta ;n} \,, G_{\alpha, \beta }(z)  w_{\beta ;n})_{L^2 (\Omega )}$, $z\in \mathcal S_n$.
%, cf.~(\ref{eq-kernelexp6}) and the definition of $\tilde \eta %(\cdot, \cdot)$ in~(\ref{eq-deriv=1}).
Therefore we expand
\begin{equation}\label{eq-expansion2}
(w_{\beta ;n} \,, G_{\alpha, \beta }(z)  w_{\beta ;n})_{L^2 (\Omega )}= (w_{\beta ;n} \,, G_{\alpha, \beta }(\mathcal E_{\beta ; n })  w_{\beta ;n})_{L^2 (\Omega )}+ (z-\mathcal E_{\beta; n}) (w_{\beta ;n} \, T_n (z)w_{\beta ;n})_{L^2 (\Omega )}\,,
\end{equation}
where $T_n (z)$ has the norm that can be bounded by the norm of $G_{\alpha, \beta } (z)$, see (\ref{eq-expansionG}).
Applying the expansions (\ref{eq-expansion1}) and (\ref{eq-expansion2}) we get
%\\ STOP\\
\begin{align}
\begin{split}
\label{eq-expansion3}
z_n (\rho)= &\mathcal E_{\beta ;n } - \frac{1}{\beta }\Big [\, 1+\beta^{-1} (w_{\beta ;n} \,, T_n (z)w_{\beta ;n})_{L^2 (\Omega )}\,\Big ]^{-1} \Big[   \,(w_{\beta ;n} \,, G_{\alpha, \beta }(\mathcal E_{\beta; n })  w_{\beta ;n})_{L^2 (\Omega )} \\
&+
(w_{\beta ;n}\, , G_{\alpha, \beta }(z)  A_n (z)G_{\alpha, \beta } (z) w_{\beta ;n})_{L^2 (\Omega )}
   % \\ + &
      +(w_{\beta ;n}\, , G_{\alpha, \beta }(z)  [A_n (z)G_{\alpha, \beta } (z) ]^2\, w_{\beta ;n}\,)_{L^2 (\Omega )}+...\Big]\,.
   \end{split}
\end{align}
The above expansion shows that the dominated term is determined by
\begin{equation}\label{eq-domin}
-\frac{1}{\beta }(w_{\beta ;n} \,, G_{\alpha, \beta }(\mathcal E_{\beta; n })  w_{\beta ;n})_{L^2 (\Omega )}\,,
\end{equation}
which, in view of (\ref{eq-asymG}),
yields the asymptotics
%\begin{align}
%z_n (\rho)= \mathcal E_{\beta ;n } - \frac{1}{\beta } \Big[  & %\,(w_{\beta ;n} \,, G_{\alpha, \beta }(z)  w_{\beta ;n})_{L^2 %(\Omega )} -
%(w_{\beta ;n}\, , G_{\alpha, \beta }(z)  A_n (z)G_{\alpha, %\beta } (z) w_{\beta ;n})_{L^2 (\Omega )}
 %   \\ + &
%      (w_{\beta ;n}\, , G_{\alpha, \beta }(z)  [A_n %%%(z)G_{\alpha, \beta } (z) ]^2\, w_{\beta ;n}\,)_{L^2 (\Omega %)}+...\Big]
%\end{align}
\begin{equation}\label{eq-zn}
z_{n} (\rho )= \mathcal E_{\beta ; n}+\mathcal O \Big(\frac{\e^{-\sqrt{2 |\mathcal E_{\beta; n}|} \rho}  }{\rho }\Big)\,.
\end{equation}
This completes the proof.
%The first perturbation term is determined by
%$$
%(\omega _{n} , G_{\alpha, \beta }(z)  A(z)G_{\alpha, \beta } %(z) \omega_n)_{L^2 (\Omega )}\,.
%$$
\end{proof}

\section{Fermi's golden rule} \label{sec-Fermi}

%We assume that $\mathcal E_{\beta ; n} \in (E_{\alpha; j}\,, E_{\alpha; j+1})$.
The discussion of this section is devoted to a deeper analysis of the asymptotitcs of  $\Im z_n (\rho)$.
%In particular, we will be interested in the dominated term $\Gamma _n (\rho )$ of this asymptotics\footnote{With this notation we can write $z_n (\rho)= \Gamma_n (\rho) + o(\Gamma_n (\rho))$.}.
 First we analyze the component  of $\Im z_n (\rho)$ given by
\begin{equation} \label{eq-domin1}
\Gamma_n (\rho ):=-\frac{1}{\beta }\Im
(w _{\beta ; n} , G_{\alpha, \beta }(\mathcal E_{\beta; n })  w_{\beta ; n})_{L^2 (\Omega )} \,,
\end{equation}
which comes from (\ref{eq-domin}); cf.~(\ref{eq-expansion3}).
%see~the discussion in the context of~(\ref{eq-domin}).
%which define the lifetime of metastable state.
Using the Birman-Schwinger principle (\ref{eq-birmanschwingeromega})   we get
\begin{equation}\label{eq-resonances-aux2}
\chi_{\Omega }R(\mathcal E_{\beta; n } )\chi_\Omega  w_{\beta; n }=\chi_\Omega \omega _{\beta ; n }\,,\quad \chi_{\Omega }R(\mathcal E_{\beta; n }) \chi_\Omega  w_{\beta; n}=  \beta^{-1}w_{\beta; n}\,.
\end{equation}
%and combining it with the fact $\chi_{\Omega }R(\mathcal %E_{\beta; n } )\chi_\Omega  w_{\beta; n }=\chi_\Omega \omega %_{n }$
This yields
\begin{equation}\label{eq-resonances-aux3}
    (w _{\beta ; n} , G_{\alpha, \beta }(\mathcal E_{\beta ;n})  w_{\beta ; n} )_{L^2 (\Omega )} = \beta^2 (\chi_\Omega \omega _{\beta ; n } , G_{\alpha, \beta } (\mathcal E_{\beta ;n}) \chi_\Omega \omega _{\beta ; n })_{L^2 (\Omega )}\,.
\end{equation}
On the other hand,  $\mathcal E_{\beta; n}<0$ belongs to the resolvent sets of  $ R_{\alpha, 0}^c(\cdot ) $ and $R(\cdot )$, which are self adjoint operators ($R_{\alpha, 0}^c(\cdot ) $ is defined by (\ref{eq-Rdc})). Therefore,  we have
\begin{equation}\label{eq-resonances-aux7} \Im (\chi_\Omega \omega_{\beta ; n } , R(\mathcal E_ {\beta;n }) \chi_\Omega \omega_{\beta ; n })_{L^2 (\Omega )}=\Im (\chi_\Omega \omega_{\beta ; n } , R_{\alpha, 0}^c(\mathcal E_ {\beta;n }) \chi_\Omega \omega_{\beta ; n })_{L^2 (\Omega )}=0\,.\end{equation}% where $R^c _{\alpha, 0} (z)$ is defined by (\ref{eq-Rdc})\,.
 %and its analytic continuation $(E_{\alpha; j}\,, E_{\alpha; j+1}0$ has been constructed in Section~\ref{sec-analytic-extension},
Using  again the equation %%\Hm{STOP here, more }
\begin{equation}\label{eq-Gbeta}
   G_{\alpha, \beta }(z)= \beta [\,\chi_{\Omega }R_{\alpha, 0 }(z)\chi_{\Omega }- \chi_{\Omega }R(z)\chi_{\Omega }\,]\,,
\end{equation}
together with~(\ref{eq-resonances-aux7}),
 we get
\begin{equation}\label{eq-resonances-aux4}
\Im (\chi_\Omega \omega_{\beta ; n } , G_{\alpha, \beta } (\mathcal E_{\beta;n} ) \chi_\Omega \omega_{\beta ; n })_{L^2 (\Omega )}= \beta \Im  (\chi_\Omega \omega_{\beta ; n } , R_{\alpha, 0}^d(\mathcal E_ {\beta;n }) \chi_\Omega \omega_{\beta ; n })_{L^2 (\Omega )}\,.
\end{equation}%\Hm{factor C!!}
Furthermore, applying~(\ref{eq-Rd}) we have
\begin{align}
\begin{split} \label{eq-epsilonlimit1}
\Im  (\chi_\Omega &\omega_{\beta ; n } , R_{\alpha, 0}^d(\mathcal E_ {\beta;n }) \chi_\Omega \omega_{\beta ; n })_{L^2 (\Omega )} \\ &=
\Im \frac{1}{2\pi}\Big[\lim_{\varepsilon\to 0 }\sum_{k\in \mathcal N_\alpha} \int_{\Omega \times \Omega  } \int_{\mathbb R} \frac{\e^{ip_1(x_1-y_1)}}{p_1^2+E_{\alpha; k} -(\mathcal E_ {\beta;n } +i\varepsilon )}  \phi_{\alpha;k }(x_2) \phi_{\alpha;k}(y_2) ^\ast \omega_{\beta ; n } (x)^\ast \omega_{\beta ; n } (y) \mathrm d  p_1\,\mathrm d x \,\mathrm d y \Big] \,.
\end{split}
\end{align}
 To deal with the integral
\begin{equation}\label{eq-frac}
\int_{\mathbb R} \frac{\e^{ip_1(x_1-y_1)}}{p_1^2-(b  _{k,n}+i\varepsilon )} \mathrm d  p_1\,,\quad \mathrm{where }\quad  b_{k,n}:= -(E_{\alpha; k} -\mathcal E_ {\beta;n })\,,
\end{equation}
we distinguish two cases  $b_{k,n}>0$ and $b_{k,n}<0$. Assume first that $b_{k,n}>0$; then the above integrand has two poles $\pm (\tilde b_\epsilon+i \tilde \varepsilon )$, where $\tilde b_\epsilon \to{ \sqrt {b_{k,n}}} $ and $\tilde \varepsilon \to  $ for $\varepsilon \to 0$.
%Choosing the contour in the appropriate way and
Applying  the
Sokhotski–Plemelj formula which can be
formally written $\lim_{\varepsilon\to 0 }\frac{1}{x\pm i\varepsilon }=\mathcal P \frac{1}{x} \mp i\pi \delta (x)$, where $\mathcal P$ stands for the principle value,
we get
\begin{align}
\begin{split}\label{eq-epsilonlimit0}    \lim_{\varepsilon\to 0 }\int_{\mathbb R} \frac{\e^{ip_1(x_1-y_1)}}{p_1^2-(b_{k,n}  +i\varepsilon )} \mathrm d  p_1 =&
\lim_{\varepsilon\to 0 }\int_{\mathbb R} \frac{\e^{ip_1(x_1-y_1)}}{(p_1-(\tilde  b_\varepsilon   +i\tilde \varepsilon ))(p_1+(\tilde  b_\varepsilon   +i\tilde \varepsilon ))} \mathrm d  p_1 \\ =&
\frac{i\pi }{2\sqrt{b_{k,n}} } \Big[ \mathrm e ^{i\sqrt{b_{k,n}}(x_1 - y_1) } +  \mathrm e ^{-i\sqrt{b_{k,n}} (x_1 - y_1) }   \Big] + \mathcal P \int_{\mathbb R} \frac{\e^{ip_1(x_1-y_1)}}{p_1^2-b_{k,n}  } \mathrm d  p_1
\\ =&\frac{i\pi}
{\sqrt{b_{k,n}}} \cos (\sqrt{b_{k,n}}|x_1-y_1|)+ \mathcal P \int_{\mathbb R} \frac{\e^{ip_1(x_1-y_1)}}{p_1^2-b_{k,n}   } \mathrm d  p_1 \,.
\end{split}
\end{align}
For $b_{k,n}<0$ the  analogous limit reproduces $\pi \frac{\mathrm{exp}(-|b_{k,n}|^{1/2}| x_1-x_2| )}{|b_{k,n}|^{1/2}}$, which contributes to the real component of (\ref{eq-resonances-aux4}).
%$(\chi_\Omega \omega_{n }, R_{\alpha, 0}^d(\mathcal E_ {\beta;n }) \chi_\Omega \omega_{n })_{L^2 (\Omega )}$.
This, in view of (\ref{eq-epsilonlimit1}) and (\ref{eq-epsilonlimit0}) implies
\begin{align}
\begin{split}\label{eq-imresolvent}
\Im  (\chi_\Omega \omega_{\beta ; n }& , R_{\alpha, 0}^d(\mathcal E_ {\beta;n }) \chi_\Omega \omega_{\beta ; n })_{L^2 (\Omega )} \\ &=
\sum_{k \,:\, E_{\alpha ; k}< \mathcal E_{\beta ;n } }
\frac{1}{  2\sqrt{b_{k,n}}}
\int_{\Omega \times \Omega  } \cos (\sqrt{b_{k,n}}|x_1 - y_1|)
 \phi_{\alpha;k }(x_2) \phi_{\alpha;k}(y_2) ^\ast \omega_{\beta ; n } (x)^\ast \omega _{\beta ; n } (y) \,\mathrm d x \,\mathrm d y \\  &=
\sum_{k=1} ^j
\frac{1}{  2\sqrt{ \mathcal E_ {\beta;n }- E_{\alpha; k}  } }
\int_{\Omega \times \Omega  } \cos ( \sqrt{ \mathcal E_ {\beta;n }- E_{\alpha; k}  }|x_1 - y_1|)
 \phi_{\alpha;k }(x_2) \phi_{\alpha;k}(y_2)^\ast \omega_{\beta ; n } (x)^\ast \omega _{\beta ; n } (y) \,\mathrm d x \,\mathrm d y\,, \end{split}\end{align}
%\begin{align}
%=   \sum_{k \,:\, \mathcal E_{\beta ;n } < E_{\alpha ; k}}C\Big| %\int_{\Omega } \omega_{n } (x) \e^{ip_1 x_1} \phi_{\alpha;k }%(x_2)\mathrm d x \Big|^2_{p_1 =...}
%\end{align}
where, in the last step, we have  changed the summation from $\sum_{k \,:\, E_{\alpha ; k}< \mathcal E_{\beta ;n } }$ to $\sum_{k=1} ^j$
since $\mathcal E_{\beta ; n} \in (E_{\alpha; j}\,, E_{\alpha; j+1})$.
%In the following we denote by  $\Gamma_n (\rho)$  the dominated term of $\Im z_n (\rho )$, i.e.~$\Im z_n (\rho )= \Gamma_n (\rho )+ o(\Gamma_n (\rho ))$.
Using now (\ref{eq-domin1}),   %and (\ref{eq-defG})
(\ref{eq-resonances-aux3}), (\ref{eq-resonances-aux4})  and (\ref{eq-imresolvent}) we can conclude
\begin{align}
\begin{split} \label{leq-fermi-special0}
\Gamma _n(\rho ) =-\beta^2\sum_{k=1} ^j
\frac{1}{  2\sqrt{ \mathcal E_ {\beta;n }- E_{\alpha; k}  } }
\Big[\int_{\Omega \times \Omega  }& \cos ( \sqrt{ \mathcal E_ {\beta;n }- E_{\alpha; k}  }|x_1 - y_1|)
 \phi_{\alpha;k }(x_2) \phi_{\alpha;k}(y_2) ^\ast  \\ & \times \omega_{\beta ; n } (x)^\ast \omega _{\beta ; n } (y) \,\mathrm d x \,\mathrm d y \Big]\,.
 \end{split}
\end{align}
In fact, $\Gamma_n (\rho)$ can be expressed by means of the generalized eigenfunctions $\varphi_{k} (x, p_1)= \frac{1}{\sqrt{2\pi}}\e^{i p_1 x_1 }\phi_{\alpha ; k}(x_2)$ of $H_{\alpha, 0}$.
Using  the first term of the second line of  (\ref{eq-epsilonlimit0}) and $\varphi_{k}(x, p_1)$  we can write  %\Hm{poprawić granice}
\begin{align}
\begin{split} \label{eq-Gamman1}
\Gamma _n (\rho )=  -\beta ^2 \sum_{k=1}^j\frac{\pi}{ 2\sqrt{ \mathcal E_ {\beta;n }- E_{\alpha; k}  }
}\Big[ &\Big| \int_{\Omega }\omega_ {\beta; n } (x)^{\ast } \varphi_{k} (x,p_1 )\mathrm d x \Big|^2_{p_1=(\mathcal E_{\beta; n}- E_{\alpha; k } )^{1/2 }} \\ +
&\Big| \int_{\Omega }\omega_ {\beta; n } (x)^{\ast } \varphi_{k} (x,p_1 )\mathrm d x \Big|^2_{p_1= -(\mathcal E_{\beta; n}- E_{\alpha; k } )^{1/2 }} \,\Big]\,,
\end{split}
    \end{align}
    where the lower indexes indicate that the values of functions are taken at specified points.
%    Since the integrals in (\ref{eq-Gamman1}) and (\ref{eq-Gamman2}) are equal we have
%    \begin{align} \label{eq-Gamman3}
%\Gamma _n (\rho )=  -\beta ^2 \sum_{k=1}^j\frac{\pi}{ \sqrt{ \mathcal E_ {\beta;n }- E_{\alpha; k}  }
%} &\Big| \int_{\Omega }\omega_ {\beta; n } (x)^{\ast } \varphi_{k} (x,p_1 )\mathrm d x \Big|^2 _{p_1=(\mathcal E_{\beta; n}- E_{\alpha; k } )^{1/2 }}
%    \end{align}
The above Formula %(\ref{eq-Gamman3})
  reveals the well known Fermi's golden rule, cf.~\cite[XII.~6]{RS4}. In Appendix~\ref{sec-appendix}, we show that the further terms contributing to (\ref{eq-expansion3}) possess the stronger asymptotics then (\ref{eq-Gamman1}).
\begin{remark}\rm{
  %%  Note
  %  The density state is given by $(2\sqrt{p_1})^{-1}$
 Note that, $\varphi_{k} (\cdot ,p_1 )\notin L^2 (\mathbb R^2)$, however, the function $\chi_{\Omega  }\varphi_{k} (\cdot ,p_1 )$ is well defined as an element of $ L^2 (\mathbb R^2)$. The expressions (\ref{eq-Gamman1}) can be reformulated in the form more familiar to physicists
  \begin{align}
  \begin{split} \label{eq-fermi-special-special}
\Gamma _n (\rho ) = -\beta ^2\sum_{j=1}^j \frac{\pi}{2 \sqrt{ \mathcal E_ {\beta;n }- E_{\alpha; k}  }
} &\Big [ |  ( \omega_ {\beta; n } , \chi_{\Omega  }\varphi_{k} (\cdot ,p_1 ) \,)  |^2 _{p_1=(\mathcal E_{\beta; n}- E_{\alpha; k } )^{1/2 }} \\ &+ |  ( \omega_ {\beta; n } , \chi_{\Omega  }\varphi_{k} (\cdot ,p_1 ) \,)  |^2 _{p_1=-(\mathcal E_{\beta; n}- E_{\alpha; k } )^{1/2 }} \Big]\,,
  % \\ = &-\sum_{j=1}^j \frac{\pi}{ \sqrt{ \mathcal E_ {\beta;n }- E_{\alpha; k}  }
%} | ( \omega_ {\beta; n } , V_{\beta }\varphi_{k} (\cdot ,p_1 ) %\,)  |^2 _{p_1=(\mathcal E_{\beta; n}- E_{\alpha; k } )^{1/2 }}\,,
    \end{split}
    \end{align}
where %$ V_{\beta }$ is defined by (\ref{eq-potentialbeta}) and
the scalar product in the above formulae is defined in the space $L^2 (\mathbb R^2)$.
The resonance width is determined  by
  \begin{equation}\label{eq-gamma}
  \gamma_n (\rho ) := 2  \, |\Im z_n (\rho )|\,,
  \end{equation}
 and, therefore  the dominated term of $ \gamma_n (\cdot )$  takes the form
  \begin{equation}\label{eq-gamma2}
  \gamma_n(\rho ) \approx  2 \pi
  \sum_{k=1}^j \frac{1}{ 2\sqrt{ \mathcal E_ {\beta;n }- E_{\alpha; k}  }
}
\Big[ | ( \omega_ {\beta; n } , V_{\beta }\varphi_{k} (\cdot ,p_1 ) \,)  |^2 _{p_1= p_{n,k}}
+
| ( \omega_ {\beta; n } , V_{\beta }\varphi_{k} (\cdot ,p_1 ) \,)  |^2 _{p_1= -p_{n,k}} \Big]\,,
\end{equation}
where $V_{\beta } = - \beta \chi_\Omega $ and $ p_{n,k}= (\mathcal E_{\beta; n}- E_{\alpha; k })^{1/2} $.
The above formula represents also
the decay rate of the resonant state  in the first approximation.  It reflexes the characteristic properties, namely the contribution of the transition amplitude $ ( \omega_ {\beta; n } , V_{\beta }\varphi_{k} (\cdot ,p_1 ) \,)$ and spectral density function
$(2\sqrt{ \mathcal E_ {\beta;n }- E_{\alpha; k}  })^{-1}$. The summation $ \sum_{k=1}^j $ comes from the fact that we have to take into account all modes which `catch' the resonant energy $\mathcal E_{\beta;n }$.
 }\end{remark}

%$$
%\Gamma_n \approx V^2  \Big| \sum %_{j=1}^{k+1}\int_{\mathbb R}  \chi_\Omega (x) \omega_ %{\beta; _n }^{\ast} (x) \,\e^{ip_1x_1} \phi_{k; %\alpha } (x_2) \, \mathrm d x \Big |^2_{p_1^2= %\mathcal E_{\beta; n}+k^2}
%$$
\subsection{Asymptotics of $\Gamma_n (\rho )$}
%\Hm{Definition of $\Gamma_n $}
Let us analyze in more detail the asymptotic behaviour of $\Gamma _n (\rho )$ for $\rho \to \infty$. For this aim we use the expression (\ref{eq-imresolvent}). Using the fact that  for  $x_2 \in \mathbb R\setminus [0, d]$, we have
\begin{equation} \label{eq-formphi}
\phi_{\alpha ; k } (x_2)=M_{\alpha, k} \e^{- \sqrt{|E_{\alpha; k}|}{|x_2| }}\,,
    \end{equation}
where $M_\alpha $ is the normalization constant, we can estimate (\ref{eq-imresolvent}) by
%\Hm{znak}
\begin{align}
\begin{split} \label{eq-sum}
\sum_{k=1}^j
\frac{|M_{\alpha , k}|^2 \e ^{- 2\sqrt{|E_{\alpha; k}|}{\rho}} }{  2\sqrt{ \mathcal E_ {\beta;n }- E_{\alpha; k}  } }&
\int_{\Omega \times \Omega  } \big|\cos ( \sqrt{ \mathcal E_ {\beta;n }- E_{\alpha; k}  }|x_1 - y_1|) \,
 \omega_{\beta ; n } (x)^\ast \omega _{\beta ; n } (y) \big | \,\mathrm d x \,\mathrm d y \\  &\leq j C_{j,n}  \e ^{- 2\sqrt{|E_{\alpha; j} |} \rho }\,,
 \end{split}
\end{align}
where
\begin{equation}\label{eq-C}
C_{j,n} :=  \mathrm{max }_{1\leq k\leq j } \Big[\frac{    |M_{\alpha, k }|^2
}{  2\sqrt{ \mathcal E_ {\beta;n }- E_{\alpha; k}  } }
\int_{\Omega \times \Omega  }\big| \cos ( \sqrt{ \mathcal E_ {\beta;n }- E_{\alpha; k}  }|x_1 - y_1|) \,
 \omega_{\beta ; n } (x)^\ast \omega _{\beta ; n } (y) \big|\,\mathrm d x \,\mathrm d y \Big]\,;
\end{equation}
in the above reasoning we used  the fact that $\sqrt {|\mathcal E_{\beta ; n }|} <
\sqrt { |E_{\alpha  ; k }|} $ for any $k=1,...,j$. In fact it follows from  (\ref{eq-Gamman1}) that all the  components labeled by $1,..,j$ contributing to  $\Gamma_n (\rho)$  have the  negative sign (they can  not  vanish because the functions in the scalar products are not orthogonal).  Using (\ref{eq-formphi}) we state
\begin{equation} \label{eq-asympII}
    \Gamma_n (\rho )\approx %\mathrm{const}
    \, \e ^{- 2\sqrt{|E_{\alpha; j} |} \rho }\,,\quad \mathrm{where }\quad\mathcal E_{\beta ; n} \in (E_{\alpha; j}\,, E_{\alpha; j+1})\,.
\end{equation}
%and $\mathrm {const}$ depends on $j$, $\alpha$ and $\beta$.
On the other hand, note that for $\rho$ sufficiently large we have
\begin{equation}\label{eq-sum11}
\e ^{- 2\sqrt{|E_{\alpha; j} |} \rho } \leq \frac{\e ^{- \sqrt{2|\mathcal E_{\beta ; n} |} \rho } }{\rho }\,;
\end{equation}
therefore  the asymptotics (\ref{eq-asympII}) enhances
(\ref{ieq-asympI}) for the imaginary component of $z_n (\rho)$.

\section{ Generalization of the system: varying transverse profile and perturbation defined by the Kato class measure} \label{sec-Kato}

In this section we consider a generalization of the model and assume that the transverse profile of the waveguide is given by a piecewise continuous function $\alpha\,:\, [0, d ]\to \mathbb R$. On the other hand, we generalize the  well potential supported on $\Omega$ by a  Radon  measure $\mu $ belonging to the Kato class, i.e. satisfying
\begin{equation}
    \lim_{\varepsilon \to 0 } \,\sup _{x\in \mathbb R^2}\int_{B(x,\varepsilon ) } |\log (|x-y|)|\,\mathrm d \mu (y)=0\,,
\end{equation}
where  $B(x,\varepsilon ) \subset \mathbb R^2$ denotes a ball centered at $x$ and with the radius $\varepsilon$.
Throughout this section we assume that  the support $\Omega \subset \mathbb R^2 $ of $\mu$ is compact; in this part of further discussion where support of $\mu$  plays an essential  role we will use the notation  $\mu_\Omega \equiv \mu$. The examples of the Kato measures has been discussed in~Introduction.%The potential $-\beta\chi_\Omega $ will be generalized by the measure-potential  $-\beta\mu $.

%{\color {red}\begin{remark} %\Hm{back}
%    \rm{A particular example of the Kato class measure can be defined by a measurable function $V\,:\,\mathbb R^2 \rightarrow \mathbb R$ with the support $\Omega$. Let $M$ be a Borel set that $M\subset \mathbb R^2$. Then
%    $$\mu (M) = \int_{M} V(x) \mathrm{d}x$$
%    defines the Kato class measure.

 % The other important example is, so called, the singular potential defined by the Dirac delta. To be more precise, let $\Upsilon \subset \mathbb R^2$ be continuous line (loop or open arc). Then, the measure $\mu $ is defined by % In this case
 %   $$
  %  \mu (M \cap \Upsilon ) = \mathrm {lin} (\Upsilon)\,,
%    $$
%where $\mathrm {lin} (\cdot )$ stands for arc length measure.}
%\end{remark} }

To maintain the consistency with previous discussion we define the remote perturbation by $-\beta \mu $, where $\beta \in \mathbb R$.
Then the Hamiltonian is  associated to  the sesquilinear form
\begin{equation}\label{eq-formalphabeta}
     \mathcalligra { E}_{\alpha, \beta \mu} [f]= \int_{\mathbb R^2 } |\nabla f (x) |^2 \,\mathrm d x- \int_{\Sigma }\alpha(x_2)|f (x)|^2 \,\mathrm d x
-\beta \int_{\mathbb R^2 }|I_\mu f (x)|^2 \,\mathrm d \mu (x)\,, \quad f\in W^{1,2 } (\R^2 )\,,
\end{equation}
where $I_\mu$ denotes the space embedding $I_\mu \,:\,  W^{1,2 } (\R^2 ) \to   L^2 _\mu \equiv L^2 (\mathbb R^2, \,\mathrm d \mu)$ which is continuous, cf.~\cite{AF2003}.
For any Kato class measure $\mu$ and  any small parameter $\varepsilon >0$ there exists $C(\varepsilon)$ such that
\begin{equation}\label{eq-estImu1}
\int_{\mathbb R^2 } |I_\mu f(x)|^2 \mathrm d \mu (x) \leq \varepsilon \int_{\mathbb R^2 } |\nabla f(x)|^2 \mathrm d x
+ C(\varepsilon ) \int_{\mathbb R^2 } | f(x)|^2 \mathrm d x\,,
\end{equation}
cf.~\cite{BEKS}.   This yields
\begin{equation}\label{eq-estImu2}
\int_{\mathbb R^2 } |I_\mu f(x)|^2 \mathrm d \mu (x) \leq \varepsilon \Big[\int_{\mathbb R^2 } |\nabla f(x)|^2 \mathrm d x
-  \int_{\Sigma  } \alpha (x_2)| f(x)|^2 \mathrm d x \Big] + C_1(\varepsilon ) \int_{\mathbb R^2 } | f(x)|^2 \mathrm d x\,,
\end{equation}
where $C_1 (\varepsilon)= C(\varepsilon) +\varepsilon \sup\, |\alpha (x_2)|$, which implies that the  operator associated to $\mathcalligra E_{\alpha, \beta \mu }$  has a unique self adjoint extension in view of the KLMN Theorem, cf.~\cite{RS1}.
%\Hm{citation}
The resulting  operator defines   a generalization of $H_{\alpha, \beta  }$ given by (\ref{eq-Hamiltonianalphabeta}).
%and without a danger of a confusion we keep the same notation $H_{\alpha, \beta }$ for the operator  associated to $\mathcalligra E_{\alpha, \beta \mu }$.

To reconstruct the resolvent of $H_{0, \beta \mu }$ we define the embeddings of $R(z)$ to $ L^2 _\mu $ by
\begin{equation}\label{eq-Restim}
R(z)|_{0 \mu}f= R(z)\ast f\mu = \frac{1}{2\pi}\int_{\mathbb R^2 } K_0 (k_z|x-y|)f(y)\, \mathrm d \mu (y )\,:\, L^2_\mu \to L^2 (\mathbb R^2)\,,
\end{equation}
where $k_z= -i\sqrt z $.
Furthermore, we define  $R(z)|_{ \mu 0}\,:\,L^2 (\mathbb R^2)\to L^2 _\mu $ which acts as
$ R(z)|_{ \mu 0 }f= I_\mu R(z) f\,$
and, finally, we construct  the bilateral embedding by
\begin{equation}\label{eq-Restim0}
R(z)|_{\mu \mu} := I_\mu R(z)|_{0 \mu}\,:\, L^2_\mu \to L^2_\mu\,.\end{equation}
 We are ready to construct the resolvent of $(H_{0, \beta }- z )^{-1}$ by
\begin{equation} \label{eq-betaresolvent1gen}
   R_{0,\beta \mu }(z)= R (z) +\beta  R(z)|_{0\mu} (\mathrm I -\beta  R(z)|_{\mu \mu})^{-1} R (z)|_{\mu 0} \,.
\end{equation}
The similar types of Hamiltonian were discussed, for example in~\cite{BEKS, KL}.

On the other hand, the resolvent  $R_{\alpha, 0} (z)$ represented  in the Birman--Schwinger form  is given by
\begin{equation} \label{eq-BSralpha-gen}
    R_{\alpha, 0 }(z)= R(z)  + R(z) \alpha^{1/2}\chi_{\Sigma }(\mathrm I -|\alpha|^{1/2}  \chi_{\Sigma }R(z) \alpha^{1/2}\chi_{\Sigma })^{-1}|\alpha|^{1/2}\chi_{\Sigma } R(z)\,,
\end{equation}
where $\alpha (\cdot)$ is now a function of $x_2$ and $\alpha^{1/2} = \mathrm{sgn}(\alpha ) |\alpha|^{1/2} $. %Note,  that for $\alpha (x_2)=\alpha $ formula~(\ref{eq-BSralpha-gen}) coincides with ~(\ref{eq-BSralpha}).
Using the embeddings of $R(z)$ to $L^2_\mu $ and the strip resolvent~(\ref{eq-BSralpha-gen}) we can build the operator
$$
\mathrm I- \beta R_{\alpha , 0 }(z)|_{\mu \mu}= \mathrm I- \beta \Big[ R(z)|_{\mu \mu }+  R(z) |_{\mu 0}\alpha^{1/2}\chi_\Sigma (\mathrm I -|\alpha|^{1/2} \chi_\Sigma R(z) \alpha^{1/2} \chi_\Sigma )^{-1}|\alpha|^{1/2}\chi_\Sigma   R(z)|_{0\mu }\Big]\,:\,L^2_\mu \to L^2_\mu\,,
$$
which contributes to
the Birman--Schwinger resolvent of $H_{\alpha , \beta }$.
Assuming now that $z$ is such that  the above operator is invertible,
 we construct the  resolvent  of $H_{\alpha, \beta }$ by
%\Hm{Stability of the essential spectrum}
\begin{equation}\label{eq-resolvent-alpha-beta1}
    R_{\alpha ,\beta \mu }(z)= R_{\alpha , 0} (z) +\beta  R_{\alpha, 0}(z) |_{ 0 \mu}(\mathrm I -\beta  R_{\alpha ,0 }(z)|_{\mu \mu})^{-1} R_{\alpha, 0 }(z)|_{\mu 0}\, :\,L^2 (\mathbb R^2) \to L^2 (\mathbb R^2)\,.
\end{equation}
The stability of the essential spectrum
\begin{equation}\label{eq-stab}
\sigma _{\mathrm {ess} }(H_{\alpha ,\beta \mu }) =
\sigma _{\mathrm{ess} }(H_{\alpha ,0  }) = [E_\alpha, \infty\,)
\end{equation}
is a consequence of  the  following reasoning. Note that the operators $R(z)|_{\mu 0}\,:\,L^2 (\mathbb R^2)\to L^2 _\mu $ and $R(z)|_{0\mu }\,:\,L^2 _\mu \to  L^2(\mathbb  R^2)$ are compact in view of~\cite[Le.~3.1]{BEKS}. Therefore, using~(\ref{eq-BSralpha-gen})
we can state that both
\begin{equation}\label{eq-emmbe}
R_{\alpha, 0}(z) |_{ 0 \mu}\,:\, L^2_\mu \to L^2 (\mathbb R^2)\quad \mathrm{and} \quad  R_{\alpha, 0}(z) |_{  \mu0}\,:\, L^2 (\mathbb R^2)\to L^2_\mu
\end{equation}
are also compact since the operator $(\mathrm I -|\alpha|^{1/2}  \chi_{\Sigma }R(z) \alpha^{1/2}\chi_{\Sigma })^{-1}$ is bounded. Consequently, we get the compactness of the second component of~(\ref{eq-resolvent-alpha-beta1}) and, in view of the Weyl's theorem, we come to the conclusion that the essential spectra of $H_{\alpha, 0}$ and $H_{\alpha, \beta \mu }$
coincide.
\\ \\
\noindent \emph{Assumptions on $H_{\alpha, 0}$ and $H_{0, \beta \mu }$.}  As before, let $\mathrm h_\alpha $ stand for the transverse component of the Hamiltonian $H_{\alpha, 0 }$. In the current framework it takes the form
%Generalizing (\ref{eq-trasvHam-gen}) according to the above  specified assumptions we can write
\begin{equation} \label{eq-trasvHam-gen}
    \mathrm h_\alpha = -\frac{d^2}{dx_2^2} - \alpha (x_2)\chi_{[0, d]} (x_2) \,:\, W^{2,2} (\mathbb R)\to L^2 (\mathbb R)\,.
\end{equation}
In the following we assume that $\mathrm h_\alpha$ has at least one negative discrete spectrum point. It happens, for example, if $\int_{[0,d]}\alpha (x_2)\mathrm d x _2 \geq 0$, cf.~\cite{Simon-weakly}. Following the convention of the previous discussion, we will keep the  notation $\{E_{\alpha ; 1}\leq E_{\alpha ; 2}\leq ...\leq E_{\alpha ; N_\alpha }\}$ and $\{\phi_{\alpha ; n}\}_{n\in {\mathcal N_\alpha }}$ for the eigenvalues and eigenfunctions of $\mathrm h_\alpha $, see.~(\ref{eq-evaluesalpha}).
 On the other hand, the eigenvalues and eigenfunctions of $H_{0, \beta \mu}$ will be denoted by
 $\{\mathcal E_{\beta ; 1}\leq  \mathcal E_{\beta ; 2} \leq ...\leq \mathcal E_{\beta ; N_\beta }\}$ and
$ \{\omega_{\beta  ; n}\}_{n\in {\mathcal N_\beta  }}$, cf.~(\ref{eq-eigenvalubeta}). In the following we assume that there exists $n\in \mathcal N_\beta $ and $j\in \mathcal N_\alpha $ such that
$\mathcal E_{\beta ; n} \in (E_{\alpha; j}\,, E_{\alpha; j+1})$.
\\ \\
To investigate  the problem of resonances in the generalized potential  system
we  need to estimate
\begin{equation}\label{eq-Galphabeta0}
G_{\alpha, \beta }(z)= \beta  R_{\alpha, 0}(z) |_{ \mu \mu}(\mathrm I -\beta  R_{\alpha ,0 }(z)|_{\mu \mu})^{-1} R_{\alpha, 0 }(z)|_{\mu \mu}\, :\,L^2 _\mu \to L^2 _\mu\,,
\end{equation}
which states an  analogue of~(\ref{eq-defG}). Similarly as before, we start from the estimates of the embeddings of the free resolvent
%which in the construction of the resolvent represents `pertubant'. Lemma~\ref{le-normsbounds} can be generalized as follows. The operator $R (z)|_{\mu 0} \chi_\Sigma \,:\, L^2 (\Sigma ) \to L^2_\mu $ has the Hilbert--Schmidt norm
\begin{align}\label{eq-asymR0-gen}
\begin{split}
\|  R (z)|_{\mu 0  } \chi_\Sigma \|^2_{\mathrm{HS}} &=
\frac{1}{2\pi }\int_{\mathbb R^2  }\int_{\Sigma }|I_\mu K_0 (k_z |x-y|) |^2 \mathrm d \mu (x)  \mathrm d y \\ &\leq
 \sup_{x\in \Omega }\frac{|\mu |}{2\pi }\int_{\Sigma }|I_\mu K_0 (k_z |x-y|) |^2 \mathrm d y \leq C\frac{ |\mu |\ }{|z|^{1/2} \mathcal T_z}\, \frac{\e ^{-\sqrt{2}\mathcal T_z\rho } }{\rho}\,,
 \end{split}
\end{align}
where $|\mu|= \int_{\mathbb R^2 } \mathrm d \mu$ and the embedding $I_\mu $ is applied to the kernel defined by $K_0$ understood as the function of the variable $x$.
 Consequently, taking into account the fact that  $(\mathrm I -\beta  R_{\alpha ,0 }(z)|_{\mu \mu})^{-1}$ is bounded, the operator $G_{\alpha, \beta } (z)\,:\,L^2 _\mu  \to L^2_\mu$ can be estimated
    \begin{equation}\label{eq-asymG-gen}
\| G_{\alpha, \beta } (z)\|_{\mathrm{HS}} \leq C_1 \frac{ \e ^{-\sqrt{2}\mathcal T_z \rho } }{\rho}\,,
    \end{equation}
    where $C_1$ is a positive constant depending on $|\mu|$ and $z$. Formula~(\ref{eq-asymG-gen})
is a generalisation of (\ref{eq-asymG}).
%We again assume that $H_{0,\beta } $ has an eigenvalue $\mathcal E_{\beta ; n} \in (E_{\alpha; j}\,, E_{\alpha; j+1})$.

Repeating  the reasoning from~Section~\ref{sec-analytic-extension}
we construct  the second sheet analytic continuation of $R_{\alpha, 0}(z)$ via the segment $(E_{\alpha; j}\,, E_{\alpha; j+1})$ in the sense of (\ref{eq-resolvent-alpha-beta-IIsheet}). Using this result and relying on the same analysis as in the  proof of~Theorem~\ref{th-resonances} we can prove the that $R_{\alpha, \beta \mu } (z)$ has the second sheet pole located at %$z_n (\rho)$, where
\begin{equation} \label{ieq-asympII}
   z_ n(\rho )= \mathcal E_{\beta , n}+\mathcal O \Big(\frac{\e^{-\sqrt{2 |\mathcal E_{\beta ;n}|} \rho}  }{\rho }\Big)\,.
\end{equation}
Moreover,  the dominated term of its imaginary part satisfies
%aaaa
 \begin{align} \label{eq-Gamman1-gen}
\begin{split} \Gamma _n (\rho )=  -\beta ^2 \sum_{k=1}^j\frac{\pi}{ 2 \sqrt{ \mathcal E_ {\beta;n }- E_{\alpha; k}  }
} & \Big [\Big| \int_{\Omega }I_\mu (\omega_ {\beta; n } (x)^{\ast } \varphi_{k} (x,p_1 ) )\mathrm d \mu ( x) \Big|^2 _{p_1=(\mathcal E_{\beta; n}- E_{\alpha; k } )^{1/2 }} \\ &+
\Big| \int_{\Omega }I_\mu (\omega_ {\beta; n } (x)^{\ast } \varphi_{k} (x,p_1 ) )\mathrm d \mu ( x) \Big|^2 _{p_1=-(\mathcal E_{\beta; n}- E_{\alpha; k } )^{1/2 }}\Big]\,.
\end{split}
  \end{align}
%\begin{equation}\label{eq-asymR0}
%\| \chi_\Omega R_0 (z) \chi_\Sigma \|^2_{\mathrm{HS}} %\leq \frac{\sqrt{2} \pi |\Omega |\ }{2|z|}\, \frac{\e %^{-\sqrt{2}\mathcal T_z\rho } }{\rho}\,,
%\end{equation}
%\section{Eigenfunctions and eigenvalues, Birman--Schwinger and numerics}
%continuation via statement analogue to that repeating the reasoning from its proof.
The above result states a generalization of  Fermi's golden role (\ref{eq-Gamman1}) for the Kato class measures. %can be  generalized

 \subsection{A special case: the trap defined by a delta interaction}

In this section we consider a special case of  the Kato class measure, defined by the Dirac delta potential supported on a curve.
Let $\Omega $ be a finite $C^1$ planar curve (open arc or  loop) with the length  $l$.
Assume that \(\Omega\) is parameterized by the arc length \(s\in I_l := [0, l]\) and is the graph of the function \( s \mapsto \varpi(s) \in \R^2\).
Our aim is to analyze the resonances problem for  the Hamiltonian which can be symbolically written
\begin{equation}\label{eq-delta-formal}
   -\Delta - \alpha \chi_\Sigma (x)  - \beta\delta (x - \Omega )\,,\quad \alpha <0 \,, \,\,\beta \in \mathbb R\,.
\end{equation}
\begin{remark} Delta interaction: a particular case of Kato class measure. \rm{
In the analyzed case, the measure \(\mu\) corresponding to the delta interaction will be denoted as $\delta $. It is defined as follows: for any Borel set \(\mathcal{B} \subset \mathbb{R}^2\), we determine
\begin{equation}\label{eq-lin}
\delta (\mathcal{B}) = \mathrm{lin}(\mathcal{B} \cap \Omega),
\end{equation}
where \(\mathrm{lin}(\cdot)\) denotes the linear measure. Consequently, \(\delta \) belongs to the Kato class of measures, cf.~\cite{BEKS}.
% In this situation the perturbation formally corresponds to  $-\beta \delta (x-\Omega  )$, where $ \delta (\cdot -\Omega  )$
}
\end{remark}
%are shown in Fig.~\ref{fig:drawing}.
%Keeping in mind a particular realization of the Kato class measure, we retain the notation \(\delta \) in the subsequent discussion.
To give the expression (\ref{eq-delta-formal}) a mathematical meaning, we introduce  the corresponding embedding
$I_\delta  \,:\, W^{1,2 } (\R^2 )\to L^2 (I_l)$ acting as $I_\delta  f = f(\varpi (\cdot ))$.
Then, for this specific case,
 the sesquilinear form (\ref{eq-formalphabeta}) can be expressed as
\begin{equation}\label{eq-formalphabeta-delta}
     \mathcalligra { E}_{\alpha, \beta \delta } [f]= \int_{\mathbb R^2 } |\nabla f (x) |^2 \,\mathrm d x- \alpha
     \int_{\Sigma }|f (x)|^2 \,\mathrm d x
-\beta \int_{I_l}|  f(\varpi (s)|^2 \,\mathrm d s\,, \quad f\in W^{1,2 } (\R^2 )\,.
\end{equation}
The special case of (\ref{eq-estImu2}) takes the form
\begin{equation}\label{eq-estImu3}
\int_{I_l} |f(\varpi (s))|^2 \mathrm d s\leq \varepsilon \Big[\int_{\mathbb R^2 } |\nabla f(x)|^2 \mathrm d x
-  \alpha \int_{\Sigma  } | f(x)|^2 \mathrm d x \Big] + C_1(\varepsilon ) \int_{\mathbb R^2 } | f(x)|^2 \mathrm d x\,.
\end{equation}
Consequently, the operator $ H_{\alpha, \beta \delta } $ associated to  $\mathcalligra { E}_{\alpha, \beta \delta }$ is self adjoint.
%\end{equation}
\begin{remark} \label{re-negative-beta} Negative eigenvalues of $H_{0, \beta \delta }$ for the delta interaction trap. \rm{
It is worth mentioning that for any negative coupling constant $\beta$, the operator $H_{0, \beta \delta }$ always admits at least one negative eigenvalue. However, it should be noted at the same time that in systems of dimensions three or higher, a sufficiently weak coupling constant supported on a finite curve does not generate a discrete spectrum,
cf.~\cite[Thm 4.2(iii)]{BEKS}.}
\end{remark}
The embeddings of  $R(z)$ to $ L^2(I_l)$ defined by the Dirac delta are given by
\begin{equation}\label{eq-embeddelta1}
 R(z)|_{0 \delta }f= \frac{1}{2\pi}\int_{\mathbb R^2 } K_0 (k_z|x- \varpi (s)|)f(s)\, \mathrm d s\,:\, L^2 (I_l) \to L^2 (\mathbb R^2)\,,
 \end{equation}
and %where $k_z= -i\sqrt z $.
%Furthermore, we define  $R(z)|_{ \delta  0}\,:\,L^2 (\mathbb R^2)\to L^2 (I) $ which acts as
%$ R(z)|_{ \mu 0 }f= I_\mu R(z) f\,$
%and, finally, we construct  the bilateral embedding by
\begin{equation}\label{eq-embeddelta2}R(z)|_{\delta \delta } :=
\frac{1}{2\pi}\int_{I_l} K_0 (k_z|\varpi (\cdot )- \varpi (s)|)f(s)\, \mathrm d s
\,:\, L^2 ( I_l)\to L^2 (I_l)\,,\end{equation}
moreover,
$ R(z)|_{ \delta  0 }= I_\delta  R(z) \,:\,L^2 (\mathbb R^2)\to L^2 (I_l)$.
This allows to construct the resolvent  $(H_{0, \beta \delta  }- z )^{-1}$ by
\begin{equation} \label{eq-betaresolvent1gen-delta}
   R_{0,\beta \delta  }(z)= R (z) +\beta  R(z)|_{0\delta } (\mathrm I -\beta  R(z)|_{\delta \delta })^{-1} R (z)|_{\delta  0} \,.
\end{equation}

Assume that  $\beta >0$. Relying on Remark~\ref{re-negative-beta} we conclude  that the set of negative eigenvalues of $H_{0,\beta \delta   }$ is not empty;  the same holds for $H_{\alpha ,0  }$. This means that the  assumption formulated for the general measure $\mu$ are automatically  satisfied. We keep the notation $\mathcal E_{\beta ; n}$, $n\in \mathcal N_\beta $ and  $\mathcal E_{\alpha ; k}$, $k\in \mathcal N_\alpha  $.
For the delta interaction the inequalities (\ref{eq-asymR0-gen})
take the forms
\begin{align}%\label{eq-asymR0-gen-delta}
\|  R (z)|_{\delta  0  } \chi_\Sigma \|^2_{\mathrm{HS}} &\leq
 \sup_{x\in \Omega }\frac{l}{2\pi }\int_{\Sigma }| K_0 (k_z |x- y|) |^2 \mathrm d y \leq C\frac{ l\ }{|z|^{1/2} \mathcal T_z}\, \frac{\e ^{-\sqrt{2}\mathcal T_z\rho } }{\rho}\,.
\end{align}
Furthermore, a realization of (\ref{eq-Gamman1-gen}) for the discussed special case is giveen by
 \begin{align} \label{eq-Gamman1-gen-delta}
\begin{split}\Gamma _n (\rho )=  -\beta ^2 \sum_{k=1}^j\frac{\pi}{ 2 \sqrt{ \mathcal E_ {\beta;n }- E_{\alpha; k}  }
} & \Big [\Big| \int_{I_l } (\omega_ {\beta; n } (\varpi (s))^{\ast } \varphi_{k} (\varpi (s)p_1 ) )\mathrm d s \Big|^2 _{p_1=(\mathcal E_{\beta; n}- E_{\alpha; k } )^{1/2 }} \\ &+
\Big| \int_{I_l }(\omega_ {\beta; n } (\varpi (s))^{\ast } \varphi_{k} (\varpi (s)),p_1 ) )\mathrm d s \Big|^2 _{p_1=-(\mathcal E_{\beta; n}- E_{\alpha; k } )^{1/2 }}\Big]\,.
 \end{split}   \end{align}
 The above formula provides Fermi's golden rule for the resonant pole that appears  in a neighborhood of \( \mathcal{E}_{\beta; n} \in (E_{\alpha; j}, E_{\alpha; j+1}) \), in systems governed by \( H_{\alpha, \beta \delta} \).
\begin{remark} Waveguide with a point interaction. \rm{
Notably, Fermi's golden rule does not hold in the form analogous to (\ref{eq-Gamman1-gen-delta}) for a trap defined by a more singular potential determined by a point interaction. This follows from the fact that a point potential in two dimensional system requires logarithmic regularization, cf.~\cite{AGHH}, and consequently, the eigenfunctions $\omega_{\beta;n}$ exhibit a corresponding singularity at this  point. While we can expect resonances with an exponentially decaying imaginary component, Fermi's golden rule requires modification. We defer the exploration of this topic, along with related concepts, to future research.}
\end{remark}

\section*{Appendix: The asymptotics of futher terms contributing  to $\Im z_n (\rho)$} \label{sec-appendix}

First, we discuss the asymptotics of $(w_{\beta ;n}\, , G_{\alpha, \beta }(z)  A_n (z)G_{\alpha, \beta } (z) w_{\beta ;n})_{L^2 (\Omega )}$, which is contributes to  (\ref{eq-expansion3}). Repeating the argument from the proof of Theorem~\ref{th-resonances} it suffices to consider this component for $z=\mathcal E_{\beta; n}$. Furthermore, since we  are interested Fermi's golden rule
our aim is to analyze
\begin{eqnarray} \label{eq-asympIII}
\Im (w_{\beta ;n}\, , G_{\alpha, \beta }(\mathcal E_{\beta; n})  A_n (\mathcal E_{\beta; n})G_{\alpha, \beta } (\mathcal E_{\beta; n}) w_{\beta ;n})_{L^2 (\Omega )}\,.
\end{eqnarray}
We use the decomposition $
G_{\alpha, \beta }(\mathcal E_{\beta; n})  = G_{\alpha, \beta }^R (\mathcal E_{\beta; n}) +iG_{\alpha, \beta }^I(\mathcal E_{\beta; n}) $, where $G_{\alpha, \beta }^R (\mathcal E_{\beta; n})$ and $G_{\alpha, \beta }^I(\mathcal E_{\beta; n})$ are self adjoint. In fact $G^I_{\alpha, \beta }$ is determined by the operator employed in (\ref{eq-imresolvent}). Then (\ref{eq-asympIII}) can be written
\begin{eqnarray} \begin{split} & \Im (w_{\beta ;n} ,G_{\alpha, \beta }(\mathcal E_{\beta; n})  A_n (\mathcal E_{\beta; n})G_{\alpha, \beta } (\mathcal E_{\beta; n}) w_{\beta ;n})_{L^2 (\Omega )} \\  & =
(w_{\beta ;n}\, , G^I_{\alpha, \beta }(\mathcal E_{\beta; n})  A_n (\mathcal E_{\beta; n})G_{\alpha, \beta } ^R(\mathcal E_{\beta; n}) w_{\beta ;n})_{L^2 (\Omega )}+(w_{\beta ;n}\, , G^R_{\alpha, \beta }(\mathcal E_{\beta; n})  A_n (\mathcal E_{\beta; n})G_{\alpha, \beta } ^I(\mathcal E_{\beta; n}) w_{\beta ;n})_{L^2 (\Omega )}\,.\end{split}
\end{eqnarray}
It suffices to estimate one of the above components. Using the expression from (\ref{eq-imresolvent}) we get
\begin{align}
\begin{split}
\big|\Im  (\chi_\Omega \omega_{\beta ; n }& , G^I_{\alpha, \beta }(\mathcal E_{\beta; n})  A_n (\mathcal E_{\beta; n})G_{\alpha, \beta } ^R(\mathcal E_{\beta; n})\chi_\Omega \omega_{\beta ; n })_{L^2 (\Omega )}\big| \\  &\leq
\sum_{k=1} ^j
\frac{\beta }{  2\sqrt{ \mathcal E_ {\beta;n }- E_{\alpha; k}  } }
\int_{\Omega \times \Omega  } \big|\cos ( \sqrt{ \mathcal E_ {\beta;n }- E_{\alpha; k}  }|x_1 - y_1|)
 \phi_{\alpha;k }(x_2) \phi_{\alpha;k}(y_2)^\ast \omega_{\beta ; n } (x)^\ast \\   &\quad \qquad  \times ( A_n (\mathcal E_{\beta; n})G_{\alpha, \beta } ^R(\mathcal E_{\beta; n})\omega _{\beta ; n } )(y) \big|\,\mathrm d x \,\mathrm d y
 \\   & \leq  \mathrm{const}
\, \mathrm{e}^{-2\sqrt{ |E_{\alpha , j }|}\rho } \,|\Omega |^2 \, \|\omega _{\beta ; n } \|_{L^1 (\Omega )}\|A_n (\mathcal E_{\beta; n})G_{\alpha, \beta } ^I(\mathcal E_{\beta; n})\omega _{\beta ; n } \|_{L^1 (\Omega )}
 \\  & \leq  \mathrm{const}
\, \mathrm{e}^{-2\sqrt{| E_{\alpha , j }|}\rho }  \,
%\frac{\e^{-\sqrt{2 |\mathcal E_{\beta ;n}|} \rho}  }{\rho }
|\Omega |^2 \, \|\omega _{\beta ; n } \|_{L^2 (\Omega )} \|A_n (\mathcal E_{\beta; n})G_{\alpha, \beta } ^I(\mathcal E_{\beta; n})\omega _{\beta ; n } \|_{L^2(\Omega )}\,,
 \\   & \leq  \mathrm{const}
  \,
\frac{\mathrm{e}^{-2\sqrt{| E_{\alpha , j }|}\rho }\,\e^{-\sqrt{2 |\mathcal E_{\beta ;n}|} \rho}  }{\rho }
|\Omega |^2 \, \|\omega _{\beta ; n } \|_{L^2 (\Omega )}^2\,,
\end{split}
 \end{align}
where we used   $\|f \|_{L^1(\Omega )}\leq |\Omega |\, \|f \|_{L^2(\Omega )}$ for  $f\in L^2 (\Omega )$, in the last inequality we applied the norm bound for $G_{\alpha , \beta } (\cdot )$, (see~(\ref{eq-asymG})) and the boundness of $A_n (\mathcal E_{\beta ;n})$; the constants contained in the symbol '\rm{const}' depend on  $j$ and coupling constants $\alpha $ and $\beta$. The resulting estimate shows that  the asymptotics of (\ref{eq-asympIII}) is stronger then (\ref{eq-asympII}) and, consequently, is does not contribute to the dominated term. The further terms contributing to (\ref{eq-expansion3})  can be estimated analogously.

%%%%%%%%%%%%%%%%%%%%%%%%%%%%%%%%%%%%%%%%%%%%%%%%%%%%
\subsection*{Acknowledgement}
The authors would like to thank the anonymous referees for the careful reading of the paper and for their valuable remarks and comments, which have enhanced its quality.
This work was partially supported by a program of the Polish Ministry of Science under the title ‘Regional Excellence Initiative’, project no. RID/SP/0050/2024/1.

%%%%%%%%%%%%%%%%%%%%%%%%%%%%%%%%%%%%%%%%%%%%%%%%%%%%

%{\color{green} Asymptotics of eigenvalues of the Schrodinger
%operator with a strong -interaction on a loop}

\section*{Data Availability Statement}
Data sharing is not applicable to this article as no new data were created or analyzed in this study.

\bigskip
%$\mathcal E_{\beta ; n} \in (E_{\alpha; j}\,, E_{\alpha; j+1})$
%\section{Generalized eigenfunctions}
%\section{Exponential decay}
\section*{Declaration of conflicting interests.} Authors declare that they have no conflicts of interest.

%$\mathcal E_{\beta ; n} \in (E_{\alpha; j}\,, E_{\alpha; j+1})$
%\section{Generalized eigenfunctions}
%\section{Exponential decay}


\begin{thebibliography}{99}

\bibitem{AS} M.~Abramowitz and I.~Stegun, Handbook of Mathematical Functions with Formulas, Graphs, and Mathematical Tables.


\bibitem{AF2003} R.~A.~Adams, J.~J.~F.~Fournier, Sobolev Spaces,
2nd Edition, Volume 140 - June 26, 2003.

\bibitem{AGHH} S.~Albeverio, F.~Gesztesy, R.~Hoegh-Krohn and  H.~Holden, with an appendix by P.~Exner,  {\it Solvable Models in Quantum Mechanics} Second Edition, AMS Chelsea Publishing, Providence, R.I., 2005.

\bibitem{BFRL}
V.~Barrera-Figueroa, V.~Rabinovich and S.~Loredo-Ramírez,
Asymptotic and numerical analysis of slowly varying two-dimensional quantum waveguides,  {\it J. Phys. A: Math. Theor.}  {\bf 55} (2022)
095202.

\bibitem{BG}	
P.~Briet and M.~Gharsalli,
Stark resonances in a quantum waveguide with analytic curvature,  {\it J. Phys. A: Math. Theor.}   {\bf 49} (2016) 495202.


\bibitem{BEG} J.~Behrndt, A.~F.~M. ter Elst and F.~Gesztesy, The generalized Birman–-Schwinger principle {\it
Trans. Amer. Math. Soc.} {\bf 375} (2022), 799-845.

\bibitem{BEKS} J. F. Brasche, P. Exner, Yu. A. Kuperin and P. \v{S}eba, Schr\"odinger operators
with singular interactions, {\it J. Math. Anal. Appl.} {\bf 184} (1994), 112–139. %\Hm{complete }

\bibitem{CAZEG} P.~L.~Christiansen, H.~C.~Arnbak, A.~V.~Zolotaryuk, V.~N.~Ermakov and Y.~B.~Gaididei,
On the existence of resonances in the transmission probability for interactions arising from derivatives of Dirac's delta function,
{\it J. Phys. A: Math. Gen. }
{\bf 36}   (2003), 7589.

\bibitem{Dirac}  P.~A.~M.~Dirac, The Quantum Theory of Emission and Absorption of Radiation, {\it Proc. Roy. Soc.} {\bf A}114 243 (1927), 243–265.


\bibitem{DeGr} A.~Delitsyn and  D.~S.~Grebenkov, Resonance scattering in a waveguide with identical thick
perforated barriers, {\it Applied Mathematics and Computation}, {\bf 412} (2022) pp.126592.



\bibitem{DEM}
P. Duclos, P. Exner and B. Meller,   Exponential bounds on curvature induced resonances in
a two-dimensional Dirichlet tube, {\it Helv. Phys. Acta}  {\bf 71} (1998), 477-492.



\bibitem{EK2015} P. Exner and H. Kova\v{r}\'{i}k, Quantum Waveguides, 2015.


\bibitem{Exner21} P.~Exner, Spectral properties of soft quantum waveguides
 {\it J. Phys. A: Math. Theor.} {\bf 53} (2020),  No.~355302.

\bibitem{Exner22} P. Exner, Soft quantum waveguides in three dimensions, {\it J.Math.Phys.} {\bf 63}(4) (2022), 042103.

\bibitem{EKL23} P. Exner, S. Kondej and  V. Lotoreichik, Bound states of weakly deformed
soft waveguides. Accepted for {\it Asymp. Anal.} Preprint on arXiv:2211.01989 [math-ph].

\bibitem{ExnerKondej04} P. Exner and  S. Kondej, Schr\"odinger operators with singular interactions: a model of tunneling resonances, {\it J.~Phys.~A: Math.~and~Gen.} {\bf 37}(34)  (2004), 8255.

\bibitem{ExnerLotoreichik} P. Exner and V. Lotoreichik, Optimization of the lowest eigenvalue of a soft quantum ring,
{\it Lett. Math. Phys.} {\bf 111}(28) (2021).

\bibitem{ExnerS2023} P.~Exner and D.~Spitzkopf, Tunneling in soft waveguides:closing a book, {\it J. Phys. A: Math. Theor.} {\bf 57}
(2024) 125301.

\bibitem{ExnerV} P. Exner and  S. Vugalter,  Bound states in bent soft waveguides, accepted for {\it J.~Spect.~Th.},
Preprint on arXiv:2304.14776 [math-ph].

\bibitem{Fermi} E.~Fermi, Nuclear Physics University of Chicago Press 1950.
% ISBN 978-0226243658.

\bibitem{GHN2015}F.~Gesztesy, H.~Holden and  R.~Nichols % F. Gesztesy%. Exner, S. Kondej, V. Lotoreichik:
On Factorizations of Analytic Operator-Valued Functions and Eigenvalue Multiplicity Questions {\it Integr. Equ. Oper. Theory } {\bf 8} (2015) 61–94.


\bibitem{Teufel2} S. Haag, J. Lampart and  S. Teufel,  Generalised Quantum Waveguides
{\it Annales Henri Poincaré} {\bf 16}, (2015) 2535–2568.

\bibitem{HamsmanKrejcirik} M. Hansmann and D. Krej\v{c}i\v{r}\'{i}k, The abstract Birman--Schwinger principle and spectral stability,
{\it J. Anal. Math.} {\bf 148} (2022) 361-398.
%Available on [Springer (published version)].



%\bibitem{Howland1} J.~S.~Howland,
%Embedded eigenvalues and virtual poles, {\it Pac.~J.~Math.} {\bf 29}, 3, (1969) %565--582.


%\bibitem{Howland2} J.~S.~Howland,
% Simple poles of operator-valued functions, {\it J. Math. Anal. Appl.} {\bf 36}
%(1971) 12–21.


\bibitem{Kato} T.~Kato, {\it Perturbation Theory for Linear Operators}, Springer, 1995.


\bibitem{KSJC} C.~S.~Kim, A.~M.~Satanin
Y.~S.~Joe and R.~M.~Cosby, Resonant tunneling in a quantum waveguide: Effect of a finite-size attractive impurity,
Phys.~Rev.~B {\bf 60}(15) (1999).

\bibitem{Kondej2024} S. Kondej, Quantum soft waveguides with resonances induced by broken symmetry.  {\it J.  Phys. A: Math. Theor.} {\bf  57} (2024) 195201.


\bibitem{KondejKK21} S. Kondej, D. Krej\v{c}i\v{r}\'{i}k and  J. K\v{r}\'{i}\v{z}, Soft quantum waveguides with an explicit cut-locus,
{\it J. Phys. A: Math. Theor.} {\bf 54} (2021).


\bibitem{KL} S.~Kondej, V.~Lotoreichik, 2D Schrodinger operators perturbed by finite measures,
 {\it J. Math.~Anal.~Appl.} {\bf 420}(2), (2014).

\bibitem{KovarikS} H.~Kovařík and A.~Sacchetti, Resonances in twisted quantum waveguides, {\it J. Phys. A: Math. Theor.}
 {\bf 40}, 29  (2007) 8371.


\bibitem{KK22} D. Krej\v{c}i\v{r}\'{i}k, J. K\v{r}\'{i}\v{z}, Bound states in soft quantum layers
{\it Publ. RIMS, Kyoto University}, {\bf 60} (2024) 4.

\bibitem{LipLot} J.~Lipovský and V.~Lotoreichik, Asymptotics of Resonances Induced by Point Interactions
{\it  Acta.~Phys.~Pol.~A} {\bf 132} (6) (2017).


%\bibitem{K2007} D. Krej\v{c}i\v{r}\'{i}k, Twisting versus bending in quantum waveguides.
%Analysis on Graphs and Applications (Cambridge 2007),
%{\it Proc. Sympos. Pure Math}  {\bf 77} (2008) 617-636.
%Amer. Math. Soc., Providence, RI, 2008.

%\bibitem{Krejcirik} Birman-Schwinger


%\bibitem{EKL} P. Exner, S. Kondej, V. Lotoreichik: Asymptotics of the bound
%state induced by $\delta$-interaction supported on a weakly
%deformed plane J. Math. Phys. 59 (2018), 013051

\bibitem{Po1} A.~Posilicano,
A Krein-like Formula for Singular Perturbations of Self-Adjoint Operators and Applications,
{\it J.~Func.~Anal.}
{\bf 183} (2001) 1, 109-147.

\bibitem{RS4} M. Reed, B. Simon. \textit{ Methods of Modern Mathematical
Physics. IV. Analysis of Operators}, Academic Press, New York,
1978.

\bibitem{RS1} M.~Reed and B. Simon.  {\it Methods of Modern Mathematical
Physics, vol. I. Functional analysis}, Academic Press, New
York, 1980.

%\bibitem{Po2} A.~Posilican... triples PLEASE COMPLETE5

\bibitem{Simon-weakly} B.~Simon, The bound state of weakly coupled Schr\"odinger operators in one and two dimensions, {\it Ann. of Phys.}
{\bf 97}, Issue 2, (1976),  279-288.

\bibitem{Simon-comprehensive} B.~Simon, \textit{ Operator Theory. A Comprehensive Course in Analysis, Part 4}, American Mathematical Society, Providence,
2015.
\end{thebibliography}
\end{document}